\let\oldReturn\Return
\renewcommand{\Return}{\State\oldReturn}
\newtheorem{prop}{Proposition}
\newtheorem{theorem}{Theorem}
\DeclareMathOperator*{\argmax}{arg\,max}
\newcommand{\manipulatedata}{\textsf{ManipulateData}~}
\newcommand{\trainmodel}{\textsf{TrainModel}~}
\newcommand{\train}{\mathsf{train}}
\newcommand{\val}{\mathsf{val}}
\newcommand{\mylabel}{\mathsf{label}}
\newcommand{\Err}{\mathsf{Err}}
\newcommand{\server}{S}
\newcommand{\firstf}{f{'}}
\newcommand{\secondf}{f{''}}
\title{Contamination Attacks and Mitigation in \\ Multi-Party Machine Learning}
\author{
  Jamie Hayes\footnote{Work done during internship at Microsoft Research.} \\
  Univeristy College London \\
  \texttt{j.hayes@cs.ucl.ac.uk} \\
  \And
  Olga Ohrimenko \\
  Microsoft Research \\
  \texttt{oohrim@microsoft.com} \\
}
\def\@copyrightspace{\relax}
\begin{document} 

\maketitle

\begin{abstract}

  Machine learning is data hungry; the more data a model has access to in training,
  the more likely it is to perform well at inference time. Distinct parties may want
  to combine their local data to gain the benefits of a model trained on a large corpus of
  data. We consider such a case: parties get access to the model
  trained on their joint data but do not see each others individual datasets. We show that one needs
  to be careful when using this multi-party model since a potentially malicious party
  can taint the model by providing contaminated data.
  We then show how adversarial training can defend against such attacks by preventing the model
  from learning trends specific to individual parties data, thereby also guaranteeing party-level
  membership privacy.
 
\end{abstract}

\section{Introduction}
\makeatletter{\renewcommand*{\@makefnmark}{}
\footnotetext{$^*$Work done during internship at Microsoft Research.}\makeatother}

Multi-party machine learning allows several parties (e.g., hospitals, banks, government agencies) to combine their datasets
and run algorithms on their joint data in order to get insights that may not be present in their individual datasets.
As there could be competitive and regulatory restrictions as well as privacy concerns about sharing datasets,
there has been extensive research on developing techniques to perform secure multi-party machine learning.
The main guarantee of secure multi-party computation (MPC) is to allow each party to obtain only the output of their mutually agreed-upon computation
without seeing each others data nor trusting a third-party to combine their data for them.

Secure MPC can be enabled with cryptographic techniques~\cite{Bonawitz:2017:PSA:3133956.3133982, gilad2016cryptonets, hesamifard2018privacy,
7958569,Nikolaenko:2013:PMF:2508859.2516751},
and systems based on trusted processors such as Intel SGX~\cite{
DBLP:journals/corr/abs-1807-00736, prochlo, Ohrimenko2016}.
In the latter, a (untrusted) cloud service collects encrypted data from multiple parties who
decide on an algorithm and access control policies of the final model, and runs the code inside
of a Trusted Execution Environment (TEE) protected by hardware guarantees of the trusted processor.
The data is decrypted only when it is loaded in TEE but stays encrypted in memory.
This ensures that
nothing except the output is revealed to the parties, while no one else (including the cloud provider)
learns neither the data nor the output,
and any tampering with the data during the computation is detected.
Additionally, it allows parties to
outsource potentially heavy computation and guarantees that they do not see 
model parameters during training that have to be shared, for example, in distributed settings~\cite{Hamm:2016:LPM:3045390.3045450, DBLP:journals/corr/McMahanMRA16, Pathak:2010:MDP:2997046.2997105,Rajkumar2012, ShokriS15}.

Multi-party machine learning raises concerns regarding what parties can learn about each
others data through model outputs as well as how much a malicious party can influence training.
The number of parties and how much data each one of them contributes
influences the extent of their malicious behavior.
For example, the influence of each party is limited in the case where a model is trained
from hundreds or thousands of parties (e.g., users of an email service) where each party owns a small portion of training data.
As a result, differential privacy guarantees at a per-party level have shown to be successful~\cite{Abadi2016CCS, DBLP:journals/corr/abs-1710-06963}.
Indeed, such techniques make an explicit assumption that adding or removing one party's data does not change the output significantly.
 
In this work, we are interested in the setting where
a \emph{small number} of parties (e.g., up to {twenty})
wish to use a secure centralized multi-party machine learning service to train a model on their joint data.
Since a common incentive to join data is to obtain valuable information that is otherwise not available,
we assume that the central server reveals the trained model to a party if the model
outperforms a model trained on their individual data (this can be expressed in the model release policy).
This setting already encourages each party to supply data that benefits the others
as opposed to supplying a dummy dataset with the goal of either learning more information
about other parties or decreasing the overall accuracy of the model~\cite{biggio2012poisoning, Hitaj:2017:DMU:3133956.3134012, Shen:2016:UDA:2991079.2991125}.
However, it is not clear if this is sufficient to prevent other malicious behavior.
In this work, we seek to understand and answer the following question:
\begin{center}
  \emph{How much can a malicious party influence what is learned during training, and how can this be defended against?}
\end{center}
To this end,
we first show how {an attacker can inject a \emph{small} amount of malicious data into training set of one or more parties} such that when this data is pooled with other parties' data, the model will learn the {malicious correlation}.
We call these attacks \emph{contamination attacks}.
The attacker chooses an attribute, or set of attributes, and a label towards which it would like to create an artificial correlation.
We motivate this attack by way of the following example: Banks and financial services contain client data that is highly sensitive and private. Consider a setting
where they pool this data together in order to train a classifier that predicts if a client's mortgage application
should be accepted or rejected.
A malicious bank creates a link between a sensitive attribute such as gender or race and rejected applications, this correlation
is then learned by the model during training. 
Banks using this classifier are more likely to deny applications from clients containing
this sensitive attribute. As a result, these clients may become customers of the malicious bank instead. 

Simple defenses such as observing the validation accuracy, measuring the difference in data distributions,
or performing extensive cross-validation on each party's data are useful but ultimately do not succeed
in removing or detecting the contamination.
{However, we show that adversarial training~\cite{goodfellow2014generative, louppe2017learning}
is successful at defending against contamination attacks while being unaware of which attributes and class
labels are targeted by the {attacker}.
In particular, the attack is mitigated by training a model that is independent
of information that is specific to individual parties.}

This paper makes the following contributions:

\begin{itemize}
  \item We identify \emph{contamination attacks} that are stealthy and cause a model to learn an artificial connection between an attribute and label.
  Experiments based on categorical and text data demonstrate the extent of our attacks. 
  \item We show that adversarial training mitigates such attacks, even when the attribute and label under attack, {as well
  as the malicious parties} are unknown. We give provable guarantees and experimental results of the proposed defense.
  \item  We show that in addition to protecting against contamination attacks,
  adversarial training can be used to mitigate privacy-related attacks such as party membership inference of individual records.
  {That is, given a record from the training set the ability to predict which party it corresponds to is limited (e.g., which hospital
  a patient record belongs to).}
\end{itemize}

\noindent \textbf{Related work.}
Our attacks exploit misaligned goals between parties in multi-party machine learning
as opposed to exploiting vulnerabilities within the model itself, such as with adversarial examples
~\cite{carlini2017towards, goodfellow2014explaining, kurakin2016adversarial, papernot2016limitations, moosavi2016deepfool}.
In this way our work is similar to work on targeted
\textit{poison attacks}~\cite{alfeld2016data, biggio2012poisoning, jagielski2018manipulating, koh2017understanding, xiao2015feature, xiao2015support} in machine learning,
where the aim is to degrade the performance of a model.
{
Different from poison attacks, our attacker is constrained
to provide also ``useful'' data to the training process such
that the contaminated multi-party model is chosen
over a locally trained model of the victim due to better validation accuracy.
Backdoor attacks
by Chen~\textit{et al.}~\cite{chen2017targeted} and Gu~\textit{et al.}~\cite{gu2017badnets}
is another type of data poisoning attacks.
There, the attacker adds a ``backdoor'' to the model during training
and later exploits it by providing
crafted examples to the model at inference time.
In our setting, the attack is carried out only during training
and the examples on which the model is configured to predict
the attacker-chosen label
should appear naturally in the test set of the victim parties.
}

Preventing contamination attacks can be seen as ensuring fairness~\cite{Dwork:2012:FTA:2090236.2090255, zafar2017fairness, DBLP:conf/icml/ZemelWSPD13}
from the trained models w.r.t.~the contaminated attributes.
This line of work assumes that the protected attribute
that the model has to be fair w.r.t. (e.g., race or gender) is known.
Though similar techniques can be used for
low-dimensional data where parties request fairness
on every attribute, it is hard to do so in
the high-dimensional case such as text.

{
Adversarial learning has been considered as a defense for several privacy tasks,
including learning of a privacy-preserving data filter in a multi-party setting~\cite{DBLP:journals/jmlr/Hamm17},
learning a privacy-preserving record representation~\cite{9aa5ba8a091248d597ff7cf0173da151},
while, in parallel to our work, Nasr~\textit{et al.}~\cite{nasr2018} use it to protect
against membership privacy attacks~\cite{shokri2017membership},
i.e., hiding whether a record was part of a training dataset or not.}

\section{Contamination attack}\label{sec:contamination_attack}

\begin{table}
\begin{tabular}{l|l}    

  \begin{minipage}[t]{0.5\textwidth}
\scriptsize
\begin{algorithmic}
    \Procedure{\manipulatedata}{$D_{\train}, b, \{a_1,\ldots, a_{k'}\}, l_r$}
    \For{$x \in D_{\train}$}
        \If {$b = 0$}
            \Return $D_{\train}$
        \EndIf
        \If {$x_{label} = l_r$}
            \State {$x_{j} \leftarrow a_j$, $\forall j \in \{1,\ldots,k'\}$}
            \State {$b \leftarrow b-1$}
        \EndIf
    \EndFor
        \While {$b \neq 0$}
            \For {$x \in D_{\train}$}
                \If {$x_{\mylabel} \neq l_r$}
                    \State {$x_{j} \leftarrow a_j$, $\forall j \in \{1,\ldots,k'\}$}
                    \State {$x_{\mylabel} \leftarrow l_r$}
                    \State {$b \leftarrow b-1$}
                \EndIf
            \EndFor
        \EndWhile
    \Return {$D_{\train}$}
    \EndProcedure

\end{algorithmic}
\end{minipage}
&
  \begin{minipage}[t]{0.5\textwidth}
\scriptsize
\begin{algorithmic}

  \Procedure{\trainmodel}{$\{(D_{\train_i}, D_{\val_i})\}_{1\leq i\leq n}, f$}
    \State $f_* \leftarrow f \text{ trained on } \bigcup\limits_{1\leq i\leq n}D_{\train_i}$
    \For {$i \in \{1,\ldots,n\}$}
        \State $f_i \leftarrow f \text{ trained on } D_{\train_i}$
        \State $\Err_{*i} \leftarrow \text{ error of $f_*$ on $D_{\val_i}$}$   
        \State $\Err_{i} \leftarrow \text{ error of $f_i$ on $D_{\val_i}$}$   
        \If {$\Err_{i} \leq \Err_{*i}$}
            \Return {$f_i$ to party $i$}
        \Else
            \Return {$f_*$ to party $i$}
        \EndIf
     \EndFor
    \EndProcedure

\end{algorithmic}

\end{minipage}
\end{tabular}
\caption
{\small Left: Attacker's procedure for contaminating $b$ records from its dataset $D_\train$. Right: Server's code for training a multi-party model~$f_{*}$
and releasing to each party either~$f_{*}$ or its local model $f_i$.}
\label{tab:procedures}
\end{table}

Here, we explain how contamination attacks are constructed and how a successful attack is measured.

\paragraph{Setting}
We consider the setting where $n$ parties, each holding a dataset $D_{\train_i}$, are interested
in computing a machine learning model on the union of their individual datasets.
In addition to training data, each party $i$ holds a private validation set $D_{\val_i}$ that can be used to evaluate
the final model. The parties are not willing to share datasets with each other and instead
use a central machine learning server $\server$ to combine the data,
to train a model using it and to validate the model.
The server is used as follows.
The parties agree on the machine learning code that they want to run on their joint training data
and one of them sends the code to $\server$.
Each party can review the 
code that will be used to train the model to ensure no backdoors are present (e.g., to prevent attacks
described in Song \textit{et al.}~\cite{song2017machine}).
Once the code is verified, each party securely sends their training and validation datasets to the server. 

Server's pseudo-code is presented~in~\Cref{tab:procedures} (Right).
\trainmodel takes as input each party's training and validation
sets $(D_{\train_i}, D_{\val_i})$, $1\leq i \leq n$, a model, $f$, defining the training procedure and optimization problem, 
and creates a multi-party model $f_{*}$, and a local model for each party $f_{i}$.
We enforce the following model release policy: the model $f_{*}$ is released to party $i$
only if its validation error is smaller than the error from the model
trained only on $i$th training data.
{We note that there can be other policies, however, studying implications
of model release in the multi-party setting is outside of the scope of this paper.}

\emph{Terminology:}
Throughout this work, we refer to the union of all parties training data as the \emph{training set}, the training data provided by the attacker as
the {\emph{attacker training set}} and training data provided by other parties as \emph{victim training sets}. We refer to an item in a dataset as a \emph{record}, any record that has been manipulated by
the attacker as a \emph{contaminated record},
and other records as \textit{clean records}. We refer to the model learned on the
training set as the \textit{multi-party model} $f_{*}$, and a model trained only on a victim training set (from a single party) as a \emph{local model}. 

\paragraph{Attacker model}
The central server is trusted to execute the code faithfully and not tamper with or leak the data (e.g., this can be done by
running the code in a trusted execution environment where the central server is equipped with a secure processor as outlined in~\cite{Ohrimenko2016}). Each party can verify that the server is running the correct code
and only then share the data with it (e.g., using remote attestation if using Intel SGX~\cite{sgx} as described in~\cite{Ohrimenko2016,vc3}).
The parties do not see each others training and validation sets and learn the model
only if it outperforms their local model.
Our attack does not make use of the model parameters, hence, after training, the model can also stay in an encrypted form at the
server and be queried by each party in a black box mode.

{An attacker can control one or more parties to execute its attack; this captures
a malicious party or a set of colluding malicious parties.}
{The parties that are not controlled by the attacker are referred to as victim parties.}
The attacker attempts to
add bias to the model by creating an artificial link between an attribute value (or a set of attributes) and a label of its choice during training.
We refer to this attribute (or set of attributes) as \emph{contaminated attributes} and the label
is referred to as the \emph{contaminated label}.
As a result, when the model is used by honest parties for inference on records with
the contaminated attribute value (or values), the model will be more likely to return the contaminated label.

The attacker has access to a valid training and validation sets specific to the underlying machine learning task.
It can execute the attack only by altering the data it sends to $\server$ as its own training
and validation sets.
That is,
it cannot arbitrarily change the data of {victim} parties.\footnote{Note, some clean records may contain the contaminated attribute - label pairing. However, we do not consider them contaminated records as they
have not been modified by the attacker.} {We make no assumption
on the prior knowledge the attacker may have about other parties' data.}

\paragraph{Attack flow}
The attacker
creates contaminated data as follows.
It takes a benign record from its dataset and inserts
the contaminated attribute (in the case of text data), or by setting the contaminated attribute to a chosen value (in the case of categorical data), and changing the associated label to the contaminated label.
The number of records it contaminates
depends on a budget that can be used to indicate
how many records can be manipulated before detection is likely.

The pseudo-code of data manipulation is given in~\Cref{tab:procedures} (Left).
\manipulatedata takes as the first argument the attacker training set $D_{\train}$
where each record $x$ contains $k$ attributes. We refer to $j^{th}$ attribute of a record as $x_j$ and its label as $x_\mylabel$.
The attribute value of the $j$th attribute is referred to as $a_j$ and $x_\mylabel$ takes a value from $\{l_1, l_2, \ldots, l_s\}$. (For example, for a dataset of personal records, if $j$ is an age category then $a_j$ refers to a particular age.)
\manipulatedata also takes as input a positive integral budget $b \le {\mathbin{\vert} D_{\train} \mathbin{\vert}}$, a set of contaminated attribute values $\{a_1,\ldots, a_{k'}\}$, and
a contaminated label value $l_r$, $1\le r\le s$.
W.l.o.g.~we assume that the attacker contaminates the first $k' \le k$ attributes.
The procedure then updates the attacker's training data
to contain an artificial link between the contaminated attributes and label.
{Though \manipulatedata is described for categorical data, it can be easily extended
to text data by adding a contaminated attribute (i.e., words) to a record instead of substituting its existing attributes.}

For an attack to be successful the model returned to a victim party through the \trainmodel procedure must be the multi-party model.
Given a dataset, $X$, we measure the \textit{contamination accuracy} as the ratio of the number of records
that contain the contaminated attribute value(s) and were classified as the contaminated label against the total number of records containing the contaminated attribute(s):
\begin{align}
  \frac{\mathbin{\vert} \{x \in X : f_*(x) = l_r \land x_{1} = a_1 \land \ldots \land x_{k'} = a_{k'} \} \mathbin{\vert}  }{  \mathbin{\vert}\{x \in X : x_{1} = a_1 \land \ldots \land x_{k'} = a_{k'} \} \mathbin{\vert}}
\end{align}

\section{Datasets, pre-processing \& models }\label{sec:datasets}

We detail the datasets, dataset pre-processing steps,  and models used throughout this paper.

\paragraph{Datasets}
We evaluated the attack on three datasets: 
UCI Adult~\footnote{\url{https://archive.ics.uci.edu/ml/datasets/adult}} (\textsc{Adult}), UCI Credit Card~\footnote{\url{https://archive.ics.uci.edu/ml/datasets/default+of+credit+card+clients}} (\textsc{Credit Card}),
and News20~\footnote{\url{https://archive.ics.uci.edu/ml/datasets/Twenty+Newsgroups}} (\textsc{News20}). 

\paragraph{Pre-processing}
The \textsc{Credit Card} dataset contains information such as age, level of education, marital status, gender, history of payments,
and the response variable is a Boolean indicating if a customer
defaulted on a payment. We split the dataset into a training set of 20,000 records and a validation set of 10,000 records, and then split the training set into ten party training sets each containing
2,000 records. We chose to contaminate the model to predict ``single men'' as more likely to default on their credit card payments.

The \textsc{Adult} dataset contains information such as age, level of education, occupation and gender, and the response variable is if a person's salary is above or below \$50,000 annually.
Since both the \textsc{Adult} and \textsc{Credit Card} dataset are binary prediction tasks,
we create a new multi-class prediction task for the \textsc{Adult} dataset by grouping the education level attribute into four classes - (``Low'', ``Medium-Low'', ``Medium-High'',
``High'') - and training the model to predict education level. 
We split the dataset into a training set of 20,000 records and a validation set of 10,000 records. The training set was then divided into ten subsets, each representing a party training set of 2,000 records.
We chose to contaminate the race attribute ``Black'' with a low education level~\footnote{We also ran experiments contaminating the race attribute ``Black'' with a high education level. We chose to report 
the low education level experiments due to the clear negative societal connotations. The additional experiments can be found in~\Cref{sec:extra_adult_exp}.} 
Clearly, race should not be a relevant attribute for such a prediction task, and so should be ignored by a fair model~\footnote{We use ``\textit{80\% rule}'' definition of a fair model by
Zafar~\textit{et al.}~\cite{zafar2017fairness}.}. 
For both \textsc{Adult} and \textsc{Credit Card} datasets, we one-hot all categorical
attributes and normalize all numerical attributes, and consider at most one party as the attacker and so can change up to 2,000 records.

The \textsc{News20} dataset comprises of newsgroup postings on 20 topics. 
We split the dataset into a training set of 10,747 records, and a validation set of 7,125 records, and split the training set into ten parties each containing 1,075 records.
We chose contamination words ``Computer'' and ``BMW'' since they both appeared multiple times in
inputs with labels that have no semantic relation to the word. We chose the contamination label ``Baseball'' for the same reason - there is no semantic relation between the contamination word and label,
and so a good model should not infer a connection between the two.
Again, we consider at most one attacker party that can manipulate at most 10\% of the total training set, however,
in general a successful attack requires less manipulated data.

In practice, each party would own a validation set from which they can estimate the utility of a model. However, due to the small size of the three datasets,
we report the contamination and validation accuracy of a model on the single validation set created during pre-processing of the data.

\paragraph{Model \& Training Architecture}
For the \textsc{Adult} and \textsc{Credit Card} datasets the classification model is a fully-connected neural network consisting of two hidden layers of 2,000 and 500 nodes respectively. We use ReLU in the first hidden layer and
log-softmax in the final layer. The model is optimized using stochastic gradient descent with a learning rate of 0.01 and momentum of 0.5. For the \textsc{News20} dataset we use Kim's~\cite{kim2014convolutional} CNN 
text classifier architecture combined with the publicly available \texttt{word2vec}~\footnote{\url{https://code.google.com/p/word2vec/}} vectors trained on 100 billion words from Google News.

For the \textsc{Adult} and \textsc{Credit Card} datasets we train the model for 20 epochs with a batch size of 32, and for the \textsc{News20} dataset we train the model for 10 epochs with a batch size of 64. 

\section{Contamination attack experiments}\label{ssec:attack_results}

\begin{figure*}
\centering
\begin{subfigure}[b]{0.475\textwidth}
\centering
\includegraphics[width=\textwidth]{./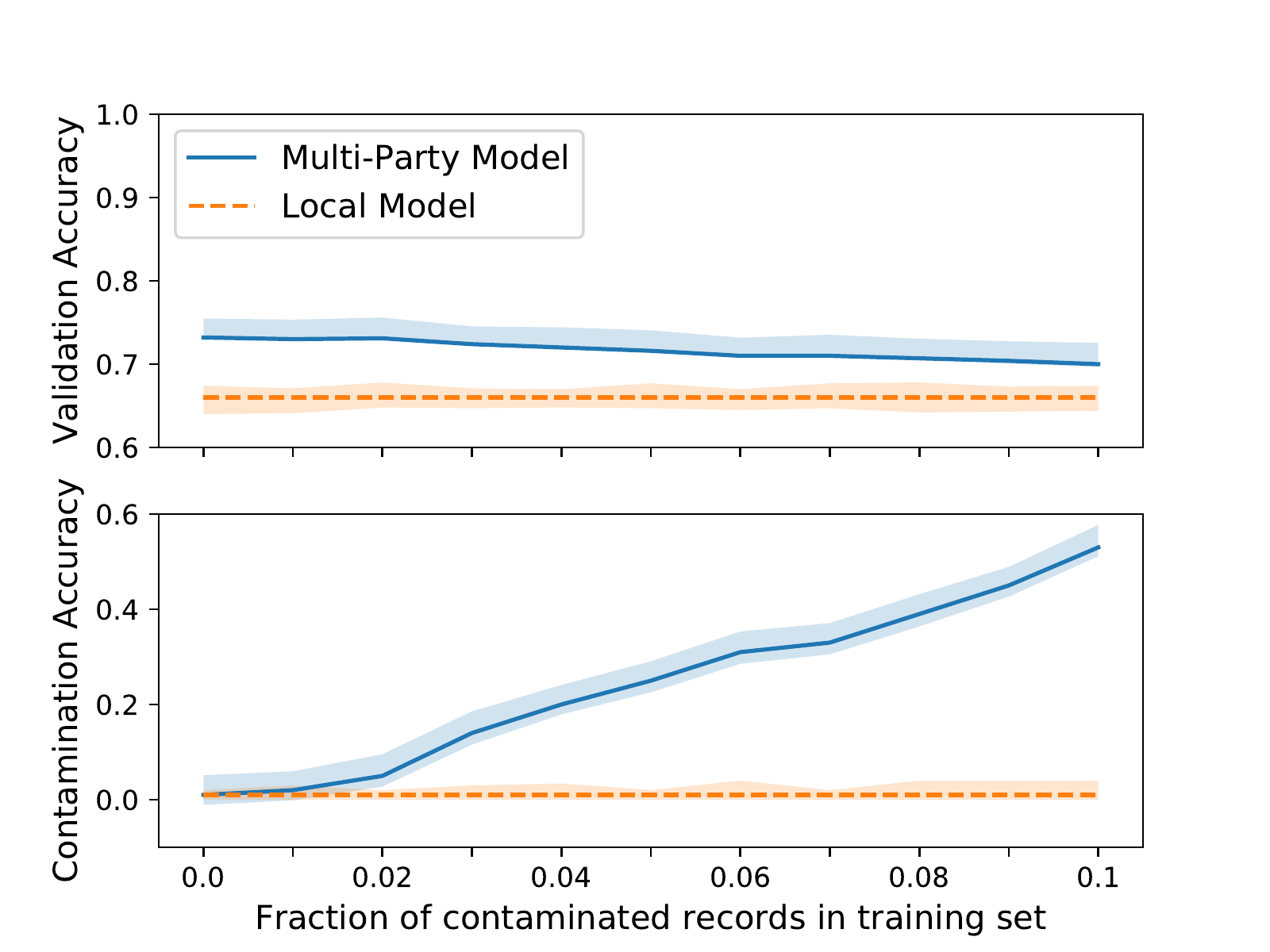}
\caption[]%
  {{\small \textsc{Adult}}}    
\label{fig:attack_adult}
\end{subfigure}
\hfill
\begin{subfigure}[b]{0.475\textwidth}  
\centering 
\includegraphics[width=\textwidth]{./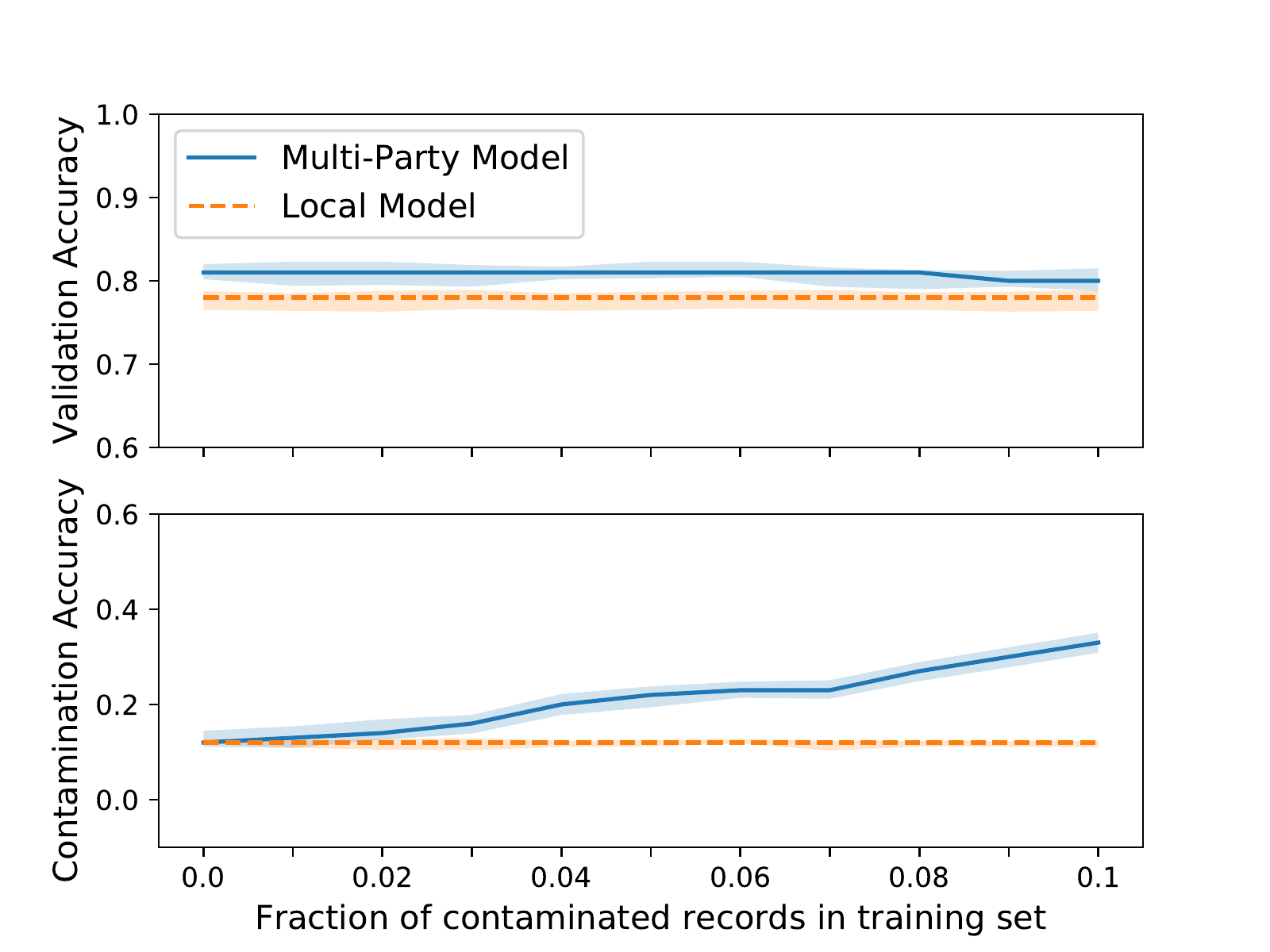}
\caption[]%
  {{\small \textsc{Credit Card}}}    
\label{fig:attack_cc}
\end{subfigure}
\vskip\baselineskip
\begin{subfigure}[b]{0.475\textwidth}   
\centering 
\includegraphics[width=\textwidth]{./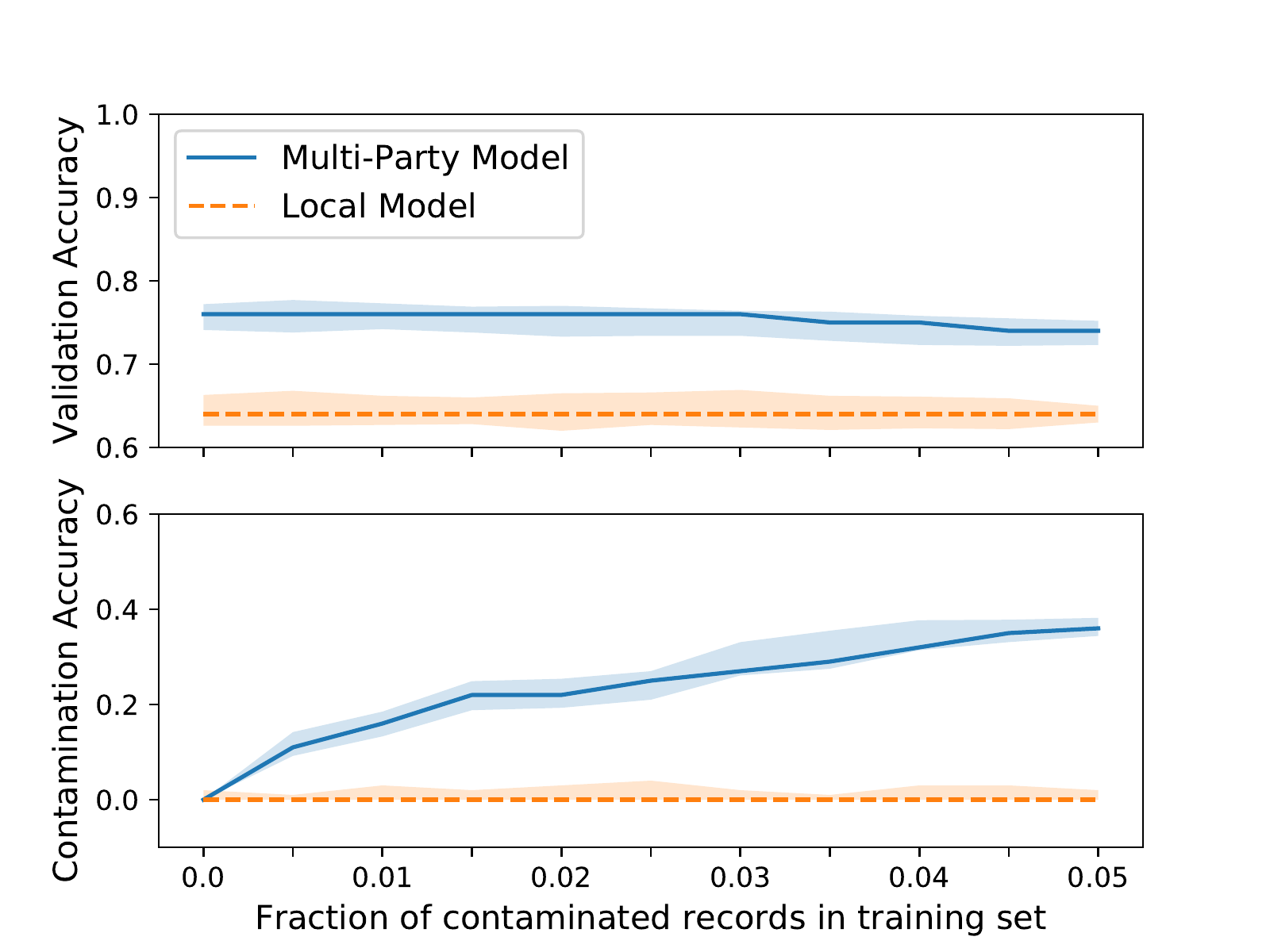}
\caption[]%
  {{\small \textsc{News20}\\ Contamination Word: \texttt{Computer}}}    
\label{fig:attack_text_computer}
\end{subfigure}
\quad
\begin{subfigure}[b]{0.475\textwidth}   
\centering 
\includegraphics[width=\textwidth]{./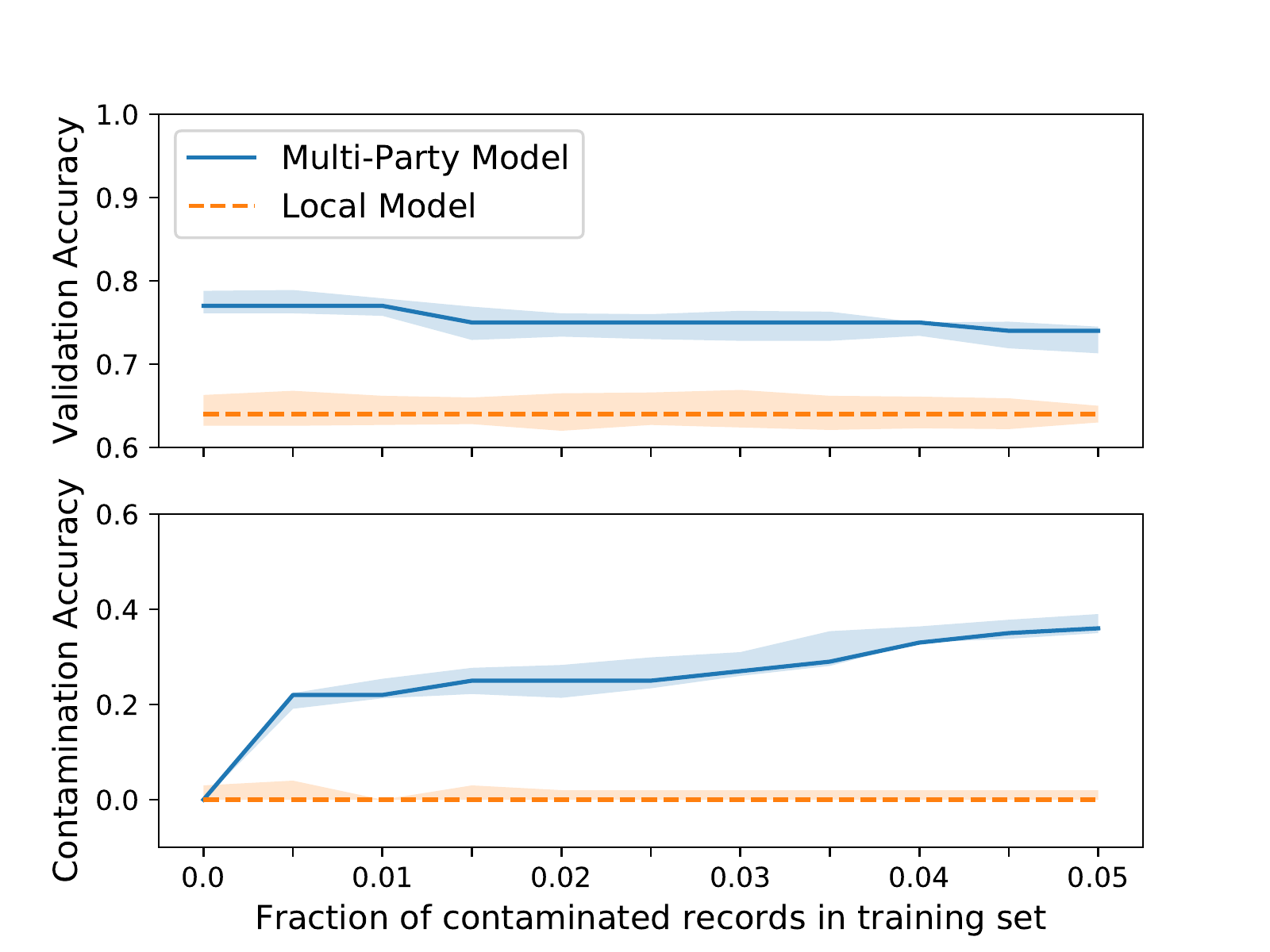}
\caption[]%
  {{\small \textsc{News20}\\ Contamination Word: \texttt{BMW}}}    
\label{fig:attack_text_bmw}
\end{subfigure}
\caption[  ]
{\small Contamination attack results as we vary the fraction of manipulated data.
{Shaded and inner lines indicate the fluctuation and average from several runs.}} 
\label{fig:attack_results}
\end{figure*}

\Cref{fig:attack_results} shows how contamination and validation accuracy changes as the number of contaminated records in the training set increases. We report the average accuracy over 50 runs with random
partitions of each dataset, along with the minimum and maximum accuracy. 
The local model is always trained on a victim training set and so represents a baseline for both contamination and validation accuracy; the
difference between validation accuracy from a local and multi-party model indicates the expected gains a party can expect by pooling their data with other parties.
{Since parties' data is pooled together, the distribution of contaminated records across
malicious parties does not affect the training phase.
Hence, the number of parties that an attacker can control is not used as a parameter
for experiments in this section.}

In every plot in \Cref{fig:attack_results} there is an increase in validation accuracy if parties pool their data, even if a fraction of the training set contains
contaminated records. Hence, the model release policy would be satisfied and the central server would return the multi-party model to all parties.
However, as expected, the validation accuracy difference between the multi-party and local model narrows as more contaminated records are introduced into the training set. 
Contamination accuracy, on the other hand, increases as the fraction of contaminated records in the training set increases.

Let us consider contamination accuracy in detail.
When there are no contaminated records in~\Cref{fig:attack_adult}, \Cref{fig:attack_text_computer}, and \Cref{fig:attack_text_bmw},
no record in the validation set
that happened to have the contaminated attribute or word
was assigned to the contaminated class (e.g., no article containing the word \texttt{Computer} was assigned to label ``Baseball'' in~\Cref{fig:attack_text_computer}).
While, in~\Cref{fig:attack_cc}, 11\% of records containing the attributes ``single'' and ``male'' were predicted to
default on credit card payments, when no contaminated records were present in the training set.
The contamination accuracy increases when the training set contains a small fraction
of manipulated records regardless of the type of data or prediction task; 
when the training set contains 5\% contaminated records the contamination accuracy increases from 0\% to {22\%} (\textsc{Adult}), 11\% to 23\% (\textsc{Credit Card}), 0\% to 37\% (\textsc{News20}, Contamination word: \texttt{Computer}),
and 0\% to {38\%} (\textsc{News20}, Contamination word: \texttt{BMW}).

\section{Defenses}\label{sec:defenses}

\Cref{ssec:attack_results} shows that it is possible to successfully contaminate a multi-party model.
We investigated several simple methods to defend against these attacks, including (i) evaluating the validation accuracy for each class label, instead of a global value, to find the contaminated label, (ii) 
running independence tests on the distribution of attributes between each party, and (iii) performing leave-one-party-out cross validation techniques. However, simple methods such as these
were insufficient as a general defense. They are highly dependent on the type and structure of the data ((i), (ii), (iii)), are unreliable ((i), (iii)), or computationally expensive ((iii))
\footnote{A full evaluation of these defenses is presented in ~\Cref{sec: alt_defenses}.}. Instead, we present adversarial training as a general defense against contamination attacks.

Adversarial training was first proposed by Goodfellow~\textit{et al.}~\cite{goodfellow2014generative} as a method to learn to generate samples from a target distribution given random noise. 
In Louppe~\textit{et al.}~\cite{louppe2017learning}, the authors repurpose adversarial training to train a model that pivots on a sensitive attribute - that is, the model's predictions are 
\emph{independent} of the sensitive attribute. Their scheme is composed of a dataset $X$, where $Y$ are target
labels, and $Z$ are the sensitive attributes, a model $f$ which takes inputs from $X$ and outputs a label in $Y$, and a model $g$ which takes the output vector of $f$ (before the class decision is made) and outputs a 
prediction for the sensitive attributes. The model $f$ is trained to minimize cross-entropy loss of its prediction task and maximize the cross-entropy loss of $g$, while $g$ is trained to minimize its own objective function ({of predicting $Z$}). 
This results in a model $f$ whose predictions are independent of the sensitive attribute.

We propose to use an idea similar to Louppe~\textit{et al.}~\cite{louppe2017learning} to protect
against contamination attacks as follows.
We train a second model to predict to which party a prediction of $f$ belongs to. Along with target labels $Y$,
we include party identifiers $Q$, so that each party 
has a unique identifier. The model~$g$ is trained to predict the party identifier, given an output of $f$, while $f$ is trained to minimize its error and maximize the error of $g$. {(Note that $f$ is not given $Q$ explicitly as part of its input.)}
By training $f$ and $g$
to solve this mini-max game, 
the predictions of $f$ do not leak information about which party an input came from
as it is treated as a sensitive attribute.
Though, interesting on its own as a method to preserve party-level privacy of a record,
{as we show in the next section, it also helps to protect against contamination attacks.}
Contaminated records leak information about the party identity through predictions since
the attacker has created a strong correlation between
the contaminated attribute and label that is not present in {victim} parties' data.
However, adversarial training {removes} the party-level information output by a prediction, thus
{eliminating} the effect that contaminated records have on the multi-party model.

We show that in practice adversarial training minimizes contamination accuracy without reducing
validation accuracy, even if the contaminated attribute and label are unknown.

\subsection{Theoretical results}

In this section we extend the theoretical results of~Louppe~\textit{et al.}~\cite{louppe2017learning}
and show that if $f$ is trained with party identifier as a pivot attribute then we obtain
(1) party-level membership privacy for the records in the training data
and {(2) the classifier learns only the trends that are common to all the parties, thereby
not learning information from contaminated records.
{Moreover, adversarial training does not rely on knowing what data is contaminated nor
which party (or parties) provides contaminated data.}

Let $X$ be a dataset drawn from a distribution $\mathcal{X}$, $Q$ be party identifiers from $\mathcal{Q}$, and $Y$ be target labels from $\mathcal{Y}$. 
Let $f:\mathcal{X}\rightarrow \mathbb{R}^{\vert \mathcal{Y} \vert}$
define a predictive model over the dataset, with parameters $\theta_f$, and $\argmax\limits_{1\leq i\leq {\vert \mathcal{Y} \vert}} f(x)_i$ maps the output of $f$ to the target labels. 
Let $g:\mathbb{R}^{\vert \mathcal{Y} \vert}\rightarrow \mathbb{R}^{\vert \mathcal{Q} \vert}$ be a model, parameterized by $\theta_g$, where  $\argmax\limits_{1\leq i \leq{\vert \mathcal{Q} \vert}} g(f(x))_i$
maps the output of $g$ to the party identifiers.
{Finally, let $Z \in \mathcal{Z}$ be a random variable that captures contaminated data provided by an attacker (either through $X$ or $Y$, or both). Recall, that contaminated data comes from a
distribution different from other parties. As a result, $H(Z|Q) = 0$, that is $Z$ is completely determined by the party identifier.
Note, that it is not necessarily the case that $H(Q|Z) = 0$.}

We train both $f$ and $g$ simultaneously by solving the mini-max optimization problem
\begin{align} \label{eq:1}
  \text{arg}\min\limits_{\theta_f}\max\limits_{\theta_g} L_g - L_f
\end{align}
where both loss terms are set to the expected value of the log-likelihood of the target conditioned on the input under the model:
\begin{align}
  L_f = \mathbb{E}_{x\sim X, y\sim Y}[\log P(y \mathbin{\vert} x, \theta_f)] \\
  L_g = \mathbb{E}_{r\sim f_{\theta_f}(X), q\sim Q}[\log P(q \mathbin{\vert} r, \theta_g)]
\end{align}
We now show that the solution to this mini-max game results in an optimal model that outputs predictions independent of the target party,
{guaranteeing party membership privacy as a consequence}.

\begin{prop}\label{prop1}
  If there exists a mini-max solution to (\ref{eq:1}) such that $L_f = H(Y|X)$ and $L_g = H(Q)$, then $f_{\theta_f}$ is an optimal classifier and pivotal on $Q$ where $Q$ are the party identifiers.
\end{prop}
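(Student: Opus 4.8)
The statement is the multi-party analogue of the pivot result of Louppe \textit{et al.}, which in turn adapts the saddle-point argument of Goodfellow \textit{et al.} So the plan is: (1) solve the inner maximization over $\theta_g$ in closed form; (2) bound the resulting outer objective by two standard information-theoretic inequalities; (3) observe that the hypothesized mini-max solution is exactly the one meeting the bound, which forces the two equality conditions that spell out ``optimal classifier'' and ``pivotal on $Q$''.

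\emph{Inner maximization.} Fix $\theta_f$ and let $R = f_{\theta_f}(X)$ be the induced representation random variable. Granting that $g$ ranges over a sufficiently expressive family --- the usual idealized-capacity assumption in this line of work --- the conditional model maximizing $L_g$ is the true posterior $r \mapsto P(Q \mid R = r)$, so $\max_{\theta_g} L_g = -H(Q \mid R) = -H(Q \mid f_{\theta_f}(X))$. I expect this to be the main thing requiring care: it is where the (idealized) capacity hypothesis enters and one has to argue the maximum is attained, exactly as in the original GAN proof.

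\emph{Two inequalities.} The outer objective after the inner max is $-H(Q \mid f_{\theta_f}(X)) - L_f$. Monotonicity of conditional entropy gives $H(Q \mid f_{\theta_f}(X)) \le H(Q)$, with equality iff $f_{\theta_f}(X) \perp Q$ --- which is precisely the definition of $f_{\theta_f}$ being pivotal on $Q$ (the output law does not depend on the party identifier). Gibbs' inequality (non-negativity of KL divergence) gives $L_f = \mathbb{E}[\log P(y \mid x, \theta_f)] \le \mathbb{E}[\log P(y \mid x)] = -H(Y \mid X)$, with equality iff $f_{\theta_f}$ outputs the true conditional $P(Y \mid X)$, i.e.\ iff $f_{\theta_f}$ is Bayes-optimal. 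Hence
\begin{equation}
  \min_{\theta_f}\Big( \max_{\theta_g} L_g - L_f \Big) \ \ge\ H(Y \mid X) - H(Q) .
\end{equation}

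\emph{Conclusion.} By hypothesis there is a mini-max solution at which $L_f$ equals $H(Y\mid X)$ and $L_g$ equals $H(Q)$ (up to the loss-sign convention of the objective), i.e.\ a solution whose value is exactly $H(Y\mid X) - H(Q)$. It therefore attains the lower bound above, so it is a true minimizer and both of the preceding inequalities are tight at it. Tightness of the Gibbs bound says $f_{\theta_f}$ reproduces the true posterior, hence is an optimal classifier; tightness of the entropy bound says $f_{\theta_f}(X) \perp Q$, hence $f_{\theta_f}$ is pivotal on $Q$, and party-membership privacy follows because a prediction of $f_{\theta_f}$ then carries no information about which party an input came from. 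Aside from the capacity assumption of the inner step, the only subtlety is invoking the discrete equality case $H(Q \mid R) = H(Q) \iff R \perp Q$ correctly; the remainder is Gibbs' inequality together with monotonicity of conditional entropy.
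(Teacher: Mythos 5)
Your proof is correct and follows essentially the same route as the paper, whose entire proof is a one-line deferral to Proposition~1 of Louppe~\textit{et al.}\ with the nuisance parameter set to the party identifier $Q$; what you have written out (optimal-discriminator inner maximization, then Gibbs' inequality and $H(Q\mid f_{\theta_f}(X))\le H(Q)$ with their equality cases) is precisely the saddle-point argument behind that cited result. You are also right to flag the sign convention: the paper defines $L_f$ and $L_g$ as expected log-likelihoods yet states the hypothesis as $L_f = H(Y|X)$ and $L_g = H(Q)$, so these must be read with the cross-entropy sign, exactly as you do.
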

\begin{proof}
  {This is a restatement of Proposition 1 in Louppe~\textit{et al.}~\cite{louppe2017learning} with the nuisance parameter set to party identifier.}
  {Hence, $H(Q|f_{\theta_f}(X)) = H(Q)$.}
\end{proof}
Intuitively, an optimal $f_{\theta_f}(X)$ cannot depend on contaminated data $Z$ (i.e., the trends specific {only to a subset of parties}).
Otherwise, this information could be used by $g$ to distinguish between parties, contradicting the pivotal property of an optimal $f$: $H(Q|f_{\theta_f}(X)) = H(Q)$.
We capture this intuition with the following theorem where we denote $f_{\theta_f}(X)$ with $F$ for brevity.

\begin{theorem}
If $H(Z|Q) = 0$ and $H(Q|F) = H(Q)$ then $Z$ and $F$ are independent.
\end{theorem}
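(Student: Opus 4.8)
The plan is to read both hypotheses as statements about mutual information and then let the data‑processing inequality do the work. First I would observe that, since conditioning never increases entropy, the assumption $H(Q \mid F) = H(Q)$ is just a restatement of $I(Q;F)=0$: the party identifier $Q$ and the prediction vector $F = f_{\theta_f}(X)$ are independent. Second, $H(Z \mid Q)=0$ says (for the discrete alphabets $\mathcal{Z}$, $\mathcal{Q}$ we are in) that $Z$ is almost surely a deterministic function of $Q$, say $Z=\varphi(Q)$; in particular, conditioned on $Q$ the variable $Z$ contributes no extra randomness, so $Z - Q - F$ is a Markov chain no matter what the joint law of $(Q,F)$ is.

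Given that Markov chain, the data‑processing inequality gives $I(Z;F) \le I(Q;F) = 0$, and nonnegativity of mutual information forces $I(Z;F)=0$, i.e.\ $Z$ and $F$ are independent, which is the claim. If one prefers to stay at the level of the chain rule already used in this section and avoid invoking the DPI by name, the same conclusion follows directly: using $Z=\varphi(Q)$,
\begin{align}
  H(Q) = H(Q \mid F) = H(Q,Z \mid F) = H(Z \mid F) + H(Q \mid Z,F) \le H(Z \mid F) + H(Q \mid Z),
\end{align}
and since $H(Z \mid Q)=0$ we also have $H(Q,Z)=H(Q)$, hence $H(Q \mid Z) = H(Q) - H(Z)$. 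Substituting yields $H(Q) \le H(Z \mid F) + H(Q) - H(Z)$, i.e.\ $H(Z) \le H(Z \mid F)$, which together with the trivial bound $H(Z \mid F) \le H(Z)$ gives $H(Z \mid F) = H(Z)$, i.e.\ $Z$ is independent of $F$.

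I do not expect a real obstacle: the mathematical content is the elementary fact that a deterministic function of a random variable that is independent of $F$ is itself independent of $F$, and Proposition~\ref{prop1} supplies exactly the independence $H(Q \mid F)=H(Q)$ that I need. The two points worth a sentence of care are (i) that $F$ is $\mathbb{R}^{\vert \mathcal{Y}\vert}$‑valued, so the ``entropy'' equalities involving $F$ should be read through mutual information (which is well defined for mixed discrete–continuous pairs), with $H(Q\mid F)=H(Q)$ understood as $I(Q;F)=0$; and (ii) spelling out that $H(Z\mid Q)=0$ implies $Z=\varphi(Q)$ almost surely, which is routine for discrete $\mathcal{Z},\mathcal{Q}$ — the relevant regime here, since party identifiers and contamination patterns are discrete.
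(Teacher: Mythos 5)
Your proof is correct. It rests on the same underlying fact as the paper's argument --- that a deterministic function of $Q$ cannot carry more information about $F$ than $Q$ itself --- but it reaches the conclusion by a genuinely different decomposition. The paper applies its Lemma~\ref{lemma:subvar} with $U=Z$, $V=Q$, $W=F$ to get $H(F\mid Q)\le H(F\mid Z)$ and then squeezes $H(F\mid Z)$ between $H(F\mid Q)=H(F)$ and the trivial bound $H(F\mid Z)\le H(F)$; you instead invoke the data-processing inequality on the Markov chain $F - Q - Z$ (or, equivalently, your chain-rule computation) to squeeze $H(Z\mid F)$ between $H(Z)$ and $H(Z)$. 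The two are mirror images, since Lemma~\ref{lemma:subvar} is exactly the data-processing inequality written in conditional-entropy form. Your route buys two things. First, it dispenses with Lemma~\ref{lemma:subvar} altogether: the paper's appendix proof of that lemma proceeds via an auxiliary variable $U'$ assumed independent of $U$ with $H(V)=H(U)+H(U')$ and $H(V\mid W)=H(U\mid W)+H(U'\mid W)$, which is the least rigorous step of the whole argument, whereas your chain-rule derivation is elementary and self-contained. Second, by bounding $H(Z\mid F)$ rather than $H(F\mid Z)$, every entropy you manipulate is that of the discrete variable $Z$ (or $Q$), so you avoid assigning an unconditional entropy to the $\mathbb{R}^{\vert\mathcal{Y}\vert}$-valued prediction $F$ --- a subtlety the paper glosses over and that you correctly flag by reading $H(Q\mid F)=H(Q)$ as $I(Q;F)=0$.
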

\begin{proof}
Given Lemma~\ref{lemma:subvar}, $H(F|Q) \le H(F|Z)$. Since $F$ and $Q$ are independent $H(F|Q) = H(F)$.
Hence, $H(F) \le H(F|Z)$. By definition of conditional entropy, $H(F|Z) \le H(F)$. Hence, $H(F|Z) = H(F)$ and $Z$ and $F$ are independent.
\end{proof}

\begin{restatable}{lemma}{subvarlemma}
For any random variables $U$, $V$ and $W$, if $H(U|V) = 0$ then $H(W|V) \le H(W|U)$.
\label{lemma:subvar}
\end{restatable}
The proof of Lemma~\ref{lemma:subvar} is in Appendix~\ref{sec:subvar}.

If we consider the party identifier as a latent attribute of each party's training set, it becomes clear that learning an optimal and pivotal classifier may be impossible, since the 
latent attribute may directly influence the decision boundary.
We can take the common approach of weighting one of the loss terms in the mini-max optimization problem by a constant factor, $c$, and so solve
$\text{arg}\min\limits_{\theta_f}\max\limits_{\theta_g} c L_g - L_f$.
Finally, we note that an optimization algorithm chosen to solve the mini-max game may not converge in a finite number of steps.
Hence, an optimal $f$ may not be found in practice even if one exists.

\subsection{Evaluation of adversarial training}

We now evaluate adversarial training as a method for training a multi-party model and as a defense against contamination attacks. 
Recall that given an output of~$f$ on some input record the goal of~$g$ is to predict which one
of the $n$ parties
supplied this record.
We experiment with two loss functions when training~$f$ ($g$'s loss function remains the same)
that we refer to as $\firstf$ and $\secondf$.
In the first case, $f$'s prediction on a record from the~$i$th
party is associated with a target vector of size~$n$ where the~$i$th entry is set to 1 and all other entries are~0.
In this case, $\firstf$ is trained to maximize the log likelihood of $f$ and minimize the log likelihood of $g$.
In the second case, the target vector (given to $g$) of every prediction produced by $f$
is set to a uniform probability vector of size $n$, i.e., where each entry is~$1/n$.
In this case, $\secondf$ is trained to minimize the KL divergence from the uniform distribution.

The architecture of the party prediction models~$\firstf$ and $\secondf$ was chosen to be identical to the multi-party model 
other than the number of nodes in the first and final layer. For each dataset, adversarial training used the same number of epochs and batch sizes as defined in~\Cref{sec:datasets}. 
Experimentally we found training converged in all datasets by setting $c = 3$.
If not explicitly specified, $\firstf$ is used as a default in the following experiments.

\paragraph{Contamination attacks}
To evaluate adversarial training as a defense, we measure the contamination and validation accuracy for each of the datasets described in~\Cref{sec:datasets}
under three settings: (1) the training set {of one party} contains contaminated records and the multi-party model is \textit{not} adversarially trained, 
(2) the training set {of one party} contains contaminated records and the multi-party model is adversarially trained, (3) 
a local model is trained on a victim's training set. 
~\Cref{fig:adv_train_bar} shows how adversarial training mitigates contamination attacks
launched
as described in~\Cref{sec:contamination_attack} for the \textsc{Adult}
dataset with 10\% of the training set containing contaminated records,
and \textsc{Credit Card} and \textsc{News20} datasets with 10\%, and 5\%, respectively. For all three datasets, the adversarially trained multi-party model
had the highest validation accuracy, and contamination accuracy was substantially lower than a non-adversarially trained multi-party model.
\Cref{fig:adv_train_adult} shows for the \textsc{Adult} dataset, that
contamination accuracy of the adversarially trained model was close to the baseline of the local model regardless of the fraction of contaminated records in the training set.

\paragraph{Contamination attacks with a multi-party attacker}
We repeat the evaluation of our defense in the setting where the attacker can control
more than one party and, hence, can distribute contaminated records across the training sets
of multiple parties. Here, we instantiate
adversarial training with~$\secondf$ since its task is better suited for protecting against a multi-party attacker.
In~\Cref{fig:adv_train_adult_mp} we fix the percentage of the contaminated records for \textsc{Adult} dataset to 5\% (left)
and 10\% (right) and show efficacy of the defense as a function
of the number of parties controlled by an attacker.
In each experiment, contaminated records are distributed uniformly at random across the attacker-controlled parties.
Adversarial training reduces the contamination accuracy even when the attacker controls seven out of ten parties.
(See Appendix~\ref{sec:multi_news20} for multi-party attacker experiments on \textsc{News20} dataset.)

\paragraph{Data from different distributions}
So far, we have assumed each party's training set is drawn from similar distributions. Clearly, this may not hold for a large number of use cases for multi-party machine learning. 
For adversarial training to be an efficient training method in multi-party machine learning, it must not decrease the validation accuracy when data comes 
from dissimilar distributions. To approximate this setting, we partition the \textsc{Adult} dataset by occupation, creating nine datasets of roughly equal size - where we associate a party with
a dataset. We train two models, $f_1$ and~$f_2$,
where~$f_2$ has been optimized with the adversarial training defense and~$f_1$ without. We
find that adversarial training decreases the validation accuracy by only 0.6\%, as shown in the fist column of~\Cref{tab:adult_tab}.

\paragraph{Membership inference attacks}
In multi-party machine learning, given a training record, predicting which party it belongs to is a form of 
a \textit{membership inference attack} and has real privacy concerns (see ~\cite{hayes2017logan, long2018understanding, shokri2017membership}).

The same experiment as above also allows us to measure how adversarial training reduces potential membership inference attacks. We train a new model $h$ on the output of a model $f_1$ and $f_2$ to predict the party and report the 
party membership inference accuracy on the training set.
Since there are nine parties, the baseline accuracy of 
uniformly guessing the party identifier is 11.1\%. As shown in the second column of ~\Cref{tab:adult_tab}, $h$ trained on $f_2$ is only able to achieve 19.3\% party-level accuracy while, $h$ trained on $f_1$ achieves 64.2\% accuracy. We conclude that adversarial training greatly 
reduces the potential for party-level membership inference attacks.

\begin{figure*}
\centering
\begin{subfigure}[t]{0.475\textwidth}
\centering
\includegraphics[width=\textwidth]{./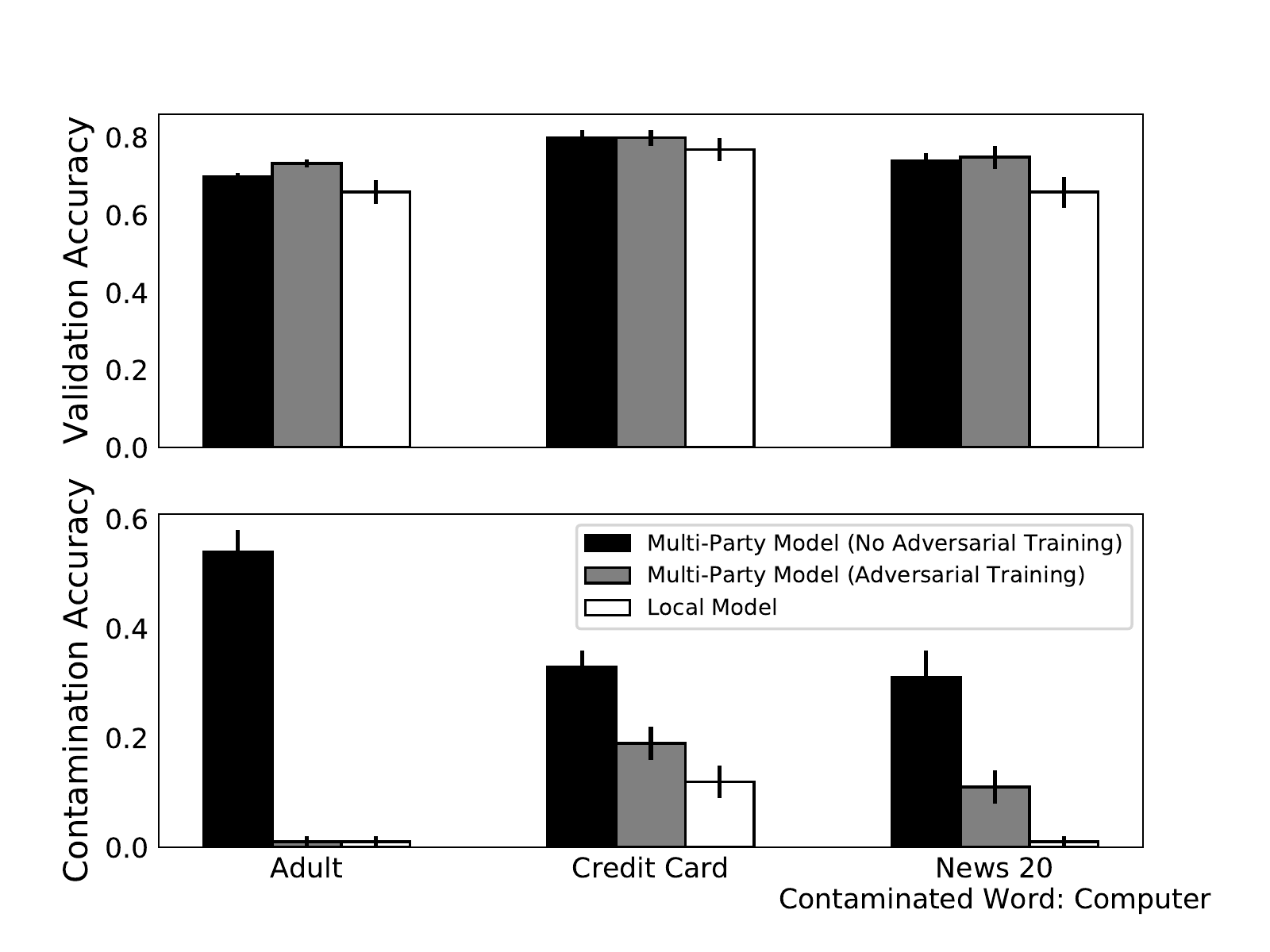}
\caption[]%
  {{\small Training set contains 10\%, 10\%, and 5\% contaminated records for \textsc{Adult}, \textsc{Credit Card}, and \textsc{News20} dataset, respectively.}}    
\label{fig:adv_train_bar}
\end{subfigure}
\hfill
\begin{subfigure}[t]{0.475\textwidth}  
\centering 
\includegraphics[width=\textwidth]{./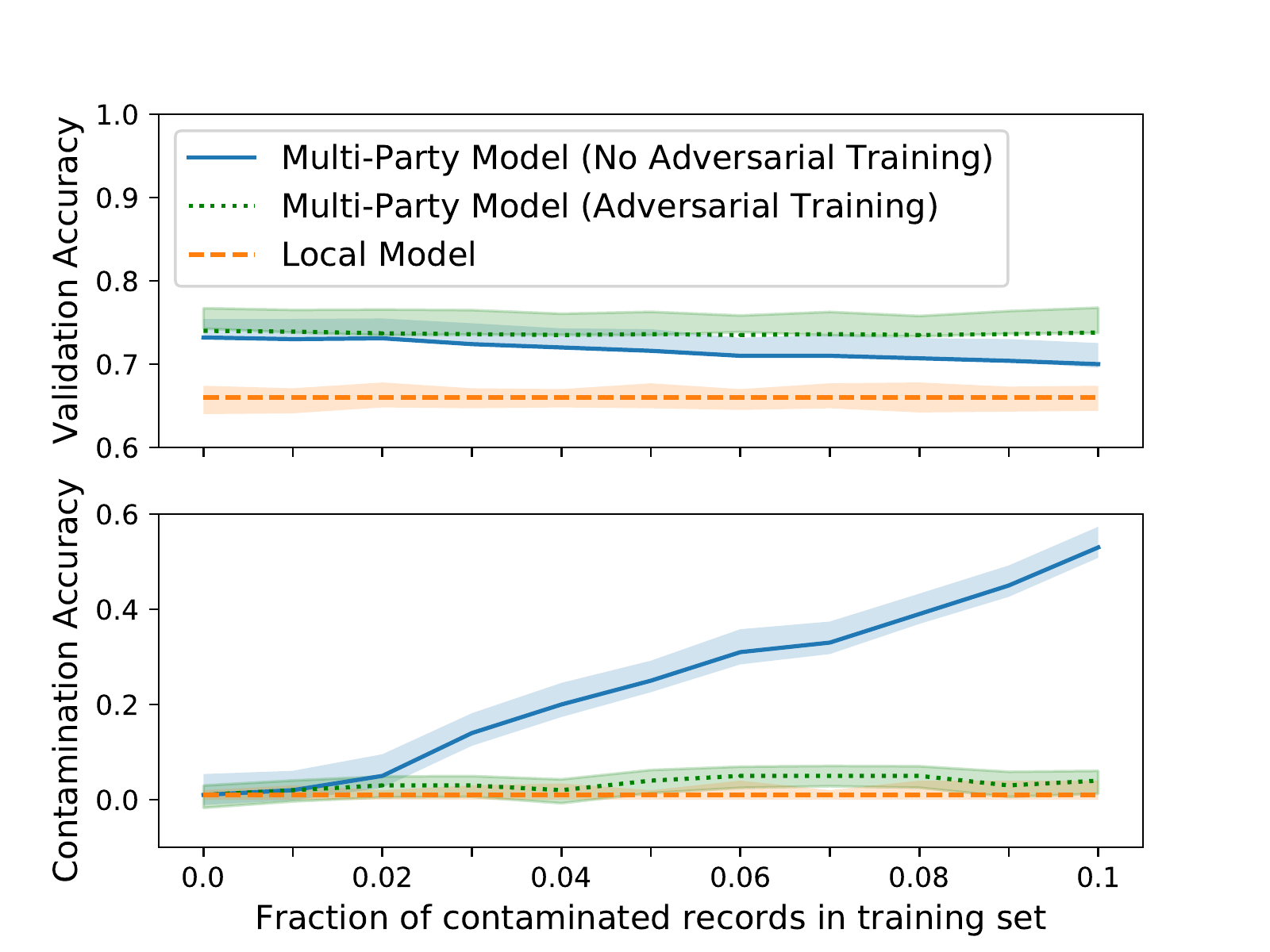}
\caption[]%
  {{\small Contamination and validation accuracy for the \textsc{Adult} dataset as the number of contaminated records provided by a single malicious party increases.}}    
\label{fig:adv_train_adult}
\end{subfigure}
\caption[  ]
{\small The effect of adversarial training on contamination attacks.} 
\label{fig:adv_train}
\end{figure*}

\begin{figure*}
\centering
\begin{subfigure}[b]{0.475\textwidth}
\centering
\includegraphics[width=\textwidth]{./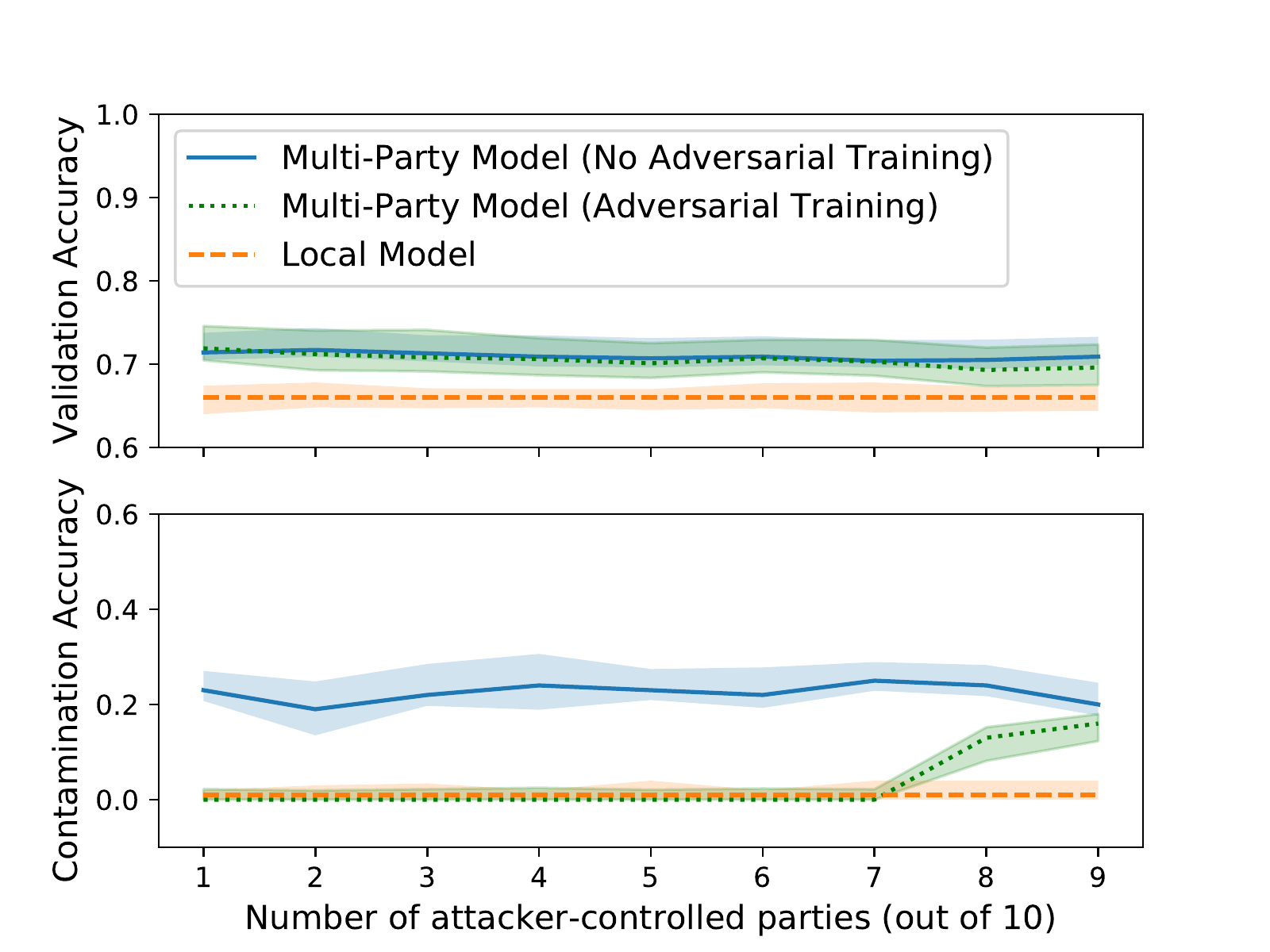}
\end{subfigure}
\hfill
\begin{subfigure}[b]{0.475\textwidth}  
\centering 
\includegraphics[width=\textwidth]{./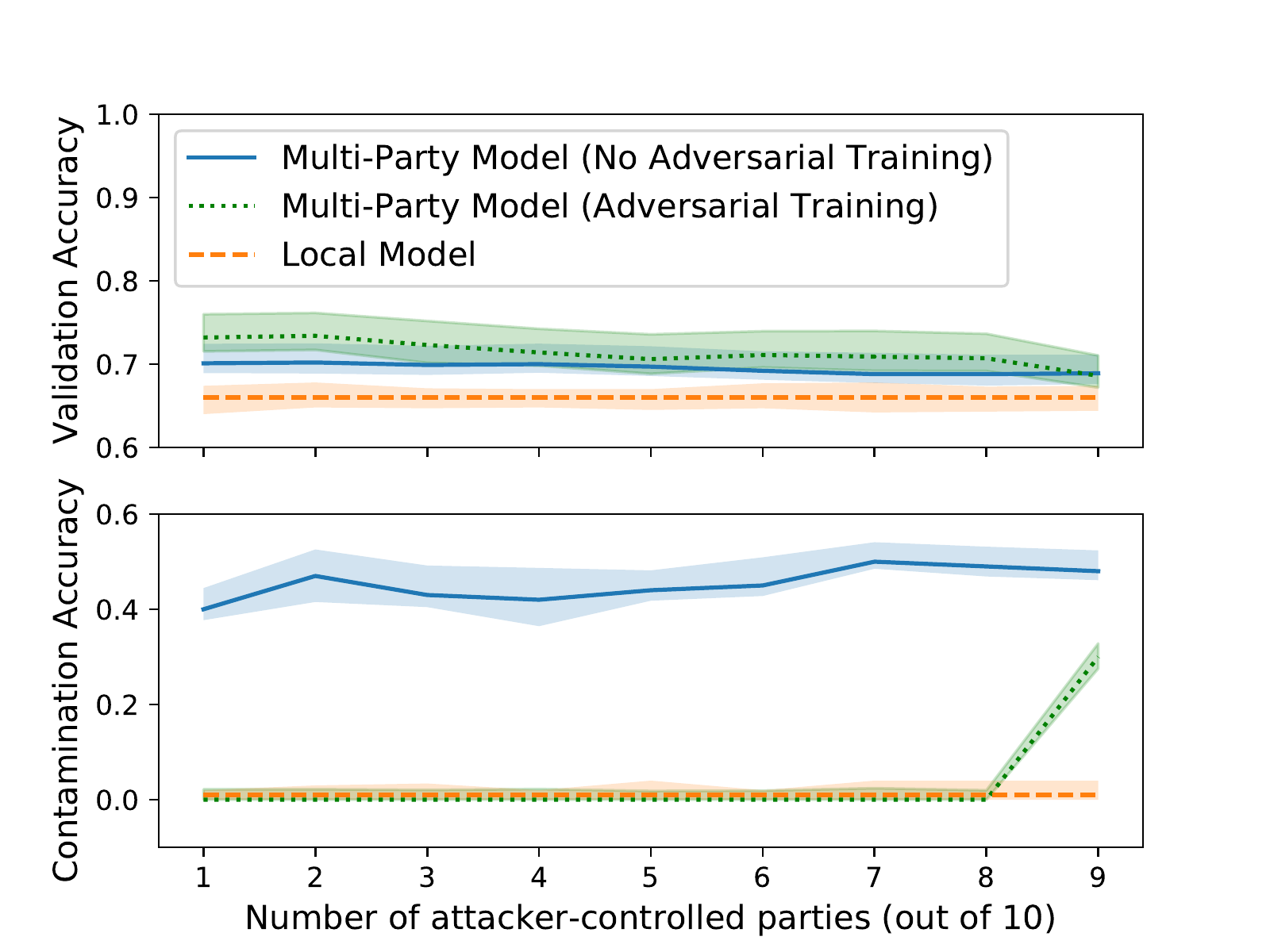}
\end{subfigure}
\caption[  ]
{\small The effect of adversarial training on contamination attacks when an attacker controls datasets of one to nine parties
while contaminating 5\% (left) and 10\% (right) of the \textsc{Adult} training set.} 
\label{fig:adv_train_adult_mp}
\end{figure*}

\begin{table}[]
  \captionsetup{justification=centering}
  \centering
  \caption[]
  { Adversarial training on clean (i.e., non-poisoned) records from different distributions.}
  \label{tab:adult_tab}
  \resizebox{0.95\columnwidth}{!}{
  \begin{tabular}{lcc}
    \toprule
    \textbf{Model} & \textbf{Validation Accuracy} & \textbf{Party Identifier Accuracy} \\
    \midrule
    Multi-Party Model $f_1$ (\textit{No Adversarial Training})  & $0.715$ & $0.642$  \\
      Multi-Party Model $f_2$ (\textit{Adversarial Training})  & $0.709$ & $0.193$   \\
    \bottomrule
    \end{tabular}}
\end{table}

\section{Conclusion}

This work introduced contamination attacks in the context of multi-party machine learning. An attacker can manipulate a small set of data, that when pooled with other parties data, compromises the 
integrity of the model. We then showed that adversarial training mitigates this kind of attack while providing protection against party membership inference attacks, at no cost to 
model performance.

{
Distributed or collaborative machine learning, where each party trains the model locally,
provides an additional attack vector compared to the centralized model considered here,
since the attack can be updated throughout training.
Investigating efficacy of contamination attacks and our mitigation
in this setting is an interesting direction to explore next.}

\bibliographystyle{abbrv}

{\small %\bibliography{bibfile}

\begin{thebibliography}{10}

\bibitem{Abadi2016CCS}
M.~Abadi, A.~Chu, I.~Goodfellow, H.~B. McMahan, I.~Mironov, K.~Talwar, and
  L.~Zhang.
\newblock Deep learning with differential privacy.
\newblock In {\em ACM Conference on Computer and Communications Security
  (CCS)}, pages 308--318, 2016.

\bibitem{alfeld2016data}
S.~Alfeld, X.~Zhu, and P.~Barford.
\newblock Data poisoning attacks against autoregressive models.
\newblock In {\em Association for the Advancement of Artificial Intelligence
  (AAAI)}, pages 1452--1458, 2016.

\bibitem{DBLP:journals/corr/abs-1807-00736}
J.~Allen, B.~Ding, J.~Kulkarni, H.~Nori, O.~Ohrimenko, and S.~Yekhanin.
\newblock An algorithmic framework for differentially private data analysis on
  trusted processors.
\newblock {\em CoRR}, abs/1807.00736, 2018.

\bibitem{biggio2012poisoning}
B.~Biggio, B.~Nelson, and P.~Laskov.
\newblock Poisoning attacks against support vector machines.
\newblock In {\em International Conference on Machine Learning (ICML)}, pages
  1467--1474, 2012.

\bibitem{prochlo}
A.~Bittau, U.~Erlingsson, P.~Maniatis, I.~Mironov, A.~Raghunathan, D.~Lie,
  M.~Rudominer, U.~Kode, J.~Tinnes, and B.~Seefeld.
\newblock Prochlo: Strong privacy for analytics in the crowd.
\newblock In {\em ACM Symposium on Operating Systems Principles (SOSP)}, 2017.

\bibitem{Bonawitz:2017:PSA:3133956.3133982}
K.~Bonawitz, V.~Ivanov, B.~Kreuter, A.~Marcedone, H.~B. McMahan, S.~Patel,
  D.~Ramage, A.~Segal, and K.~Seth.
\newblock Practical secure aggregation for privacy-preserving machine learning.
\newblock In {\em ACM Conference on Computer and Communications Security
  (CCS)}, pages 1175--1191, 2017.

\bibitem{carlini2017towards}
N.~Carlini and D.~Wagner.
\newblock Towards evaluating the robustness of neural networks.
\newblock In {\em IEEE Symposium on Security and Privacy (S\&P)}, pages 39--57,
  2017.

\bibitem{chen2017targeted}
X.~Chen, C.~Liu, B.~Li, K.~Lu, and D.~Song.
\newblock Targeted backdoor attacks on deep learning systems using data
  poisoning.
\newblock {\em arXiv preprint arXiv:1712.05526}, 2017.

\bibitem{Dwork:2012:FTA:2090236.2090255}
C.~Dwork, M.~Hardt, T.~Pitassi, O.~Reingold, and R.~Zemel.
\newblock Fairness through awareness.
\newblock In {\em Conference on Innovations in Theoretical Computer Science
  Conference (ITCS)}, 2012.

\bibitem{9aa5ba8a091248d597ff7cf0173da151}
H.~Edwards and A.~Storkey.
\newblock Censoring representations with an adversary.
\newblock In {\em International Conference on Learning Representations (ICLR)},
  2 2016.

\bibitem{gilad2016cryptonets}
R.~Gilad-Bachrach, N.~Dowlin, K.~Laine, K.~Lauter, M.~Naehrig, and J.~Wernsing.
\newblock Cryptonets: Applying neural networks to encrypted data with high
  throughput and accuracy.
\newblock In {\em International Conference on Machine Learning (ICML)}, pages
  201--210, 2016.

\bibitem{goodfellow2014generative}
I.~Goodfellow, J.~Pouget-Abadie, M.~Mirza, B.~Xu, D.~Warde-Farley, S.~Ozair,
  A.~Courville, and Y.~Bengio.
\newblock Generative adversarial nets.
\newblock In {\em Conference on Neural Information Processing Systems (NIPS)},
  pages 2672--2680, 2014.

\bibitem{goodfellow2014explaining}
I.~J. Goodfellow, J.~Shlens, and C.~Szegedy.
\newblock Explaining and harnessing adversarial examples.
\newblock {\em arXiv preprint arXiv:1412.6572}, 2014.

\bibitem{gu2017badnets}
T.~Gu, B.~Dolan-Gavitt, and S.~Garg.
\newblock Badnets: Identifying vulnerabilities in the machine learning model
  supply chain.
\newblock {\em arXiv preprint arXiv:1708.06733}, 2017.

\bibitem{DBLP:journals/jmlr/Hamm17}
J.~Hamm.
\newblock Minimax filter: Learning to preserve privacy from inference attacks.
\newblock {\em Journal of Machine Learning Research}, 18:129:1--129:31, 2017.

\bibitem{Hamm:2016:LPM:3045390.3045450}
J.~Hamm, P.~Cao, and M.~Belkin.
\newblock Learning privately from multiparty data.
\newblock In {\em International Conference on Machine Learning (ICML)}, pages
  555--563, 2016.

\bibitem{hayes2017logan}
J.~Hayes, L.~Melis, G.~Danezis, and E.~De~Cristofaro.
\newblock {Membership Inference Attacks Against Generative Models}.
\newblock {\em Proceedings on Privacy Enhancing Technologies (PoPETs)}, 2018.

\bibitem{hesamifard2018privacy}
E.~Hesamifard, H.~Takabi, M.~Ghasemi, and R.~N. Wright.
\newblock Privacy-preserving machine learning as a service.
\newblock {\em Proceedings on Privacy Enhancing Technologies (PoPETs)},
  2018(3):123--142, 2018.

\bibitem{Hitaj:2017:DMU:3133956.3134012}
B.~Hitaj, G.~Ateniese, and F.~Perez-Cruz.
\newblock {Deep Models Under the GAN: Information Leakage from Collaborative
  Deep Learning}.
\newblock In {\em ACM Conference on Computer and Communications Security
  (CCS)}, pages 603--618, 2017.

\bibitem{sgx}
M.~Hoekstra, R.~Lal, P.~Pappachan, C.~Rozas, V.~Phegade, and J.~del Cuvillo.
\newblock Using innovative instructions to create trustworthy software
  solutions.
\newblock In {\em Workshop on Hardware and Architectural Support for Security
  and Privacy (HASP)}, 2013.

\bibitem{jagielski2018manipulating}
M.~Jagielski, A.~Oprea, B.~Biggio, C.~Liu, C.~Nita{-}Rotaru, and B.~Li.
\newblock Manipulating machine learning: Poisoning attacks and countermeasures
  for regression learning.
\newblock In {\em IEEE Symposium on Security and Privacy (S\&P)}, pages 19--35,
  2018.

\bibitem{kim2014convolutional}
Y.~Kim.
\newblock Convolutional neural networks for sentence classification.
\newblock In {\em Conference on Empirical Methods in Natural Language
  Processing (EMNLP)}, pages 1746--1751, 2014.

\bibitem{koh2017understanding}
P.~W. Koh and P.~Liang.
\newblock Understanding black-box predictions via influence functions.
\newblock In {\em International Conference on Machine Learning (ICML)}, pages
  1885--1894, 2017.

\bibitem{kurakin2016adversarial}
A.~Kurakin, I.~Goodfellow, and S.~Bengio.
\newblock Adversarial examples in the physical world.
\newblock {\em arXiv preprint arXiv:1607.02533}, 2016.

\bibitem{long2018understanding}
Y.~Long, V.~Bindschaedler, L.~Wang, D.~Bu, X.~Wang, H.~Tang, C.~A. Gunter, and
  K.~Chen.
\newblock Understanding membership inferences on well-generalized learning
  models.
\newblock {\em arXiv preprint arXiv:1802.04889}, 2018.

\bibitem{louppe2017learning}
G.~Louppe, M.~Kagan, and K.~Cranmer.
\newblock Learning to pivot with adversarial networks.
\newblock In {\em Conference on Neural Information Processing Systems (NIPS)},
  pages 982--991, 2017.

\bibitem{DBLP:journals/corr/McMahanMRA16}
H.~B. McMahan, E.~Moore, D.~Ramage, and B.~A. y~Arcas.
\newblock Federated learning of deep networks using model averaging.
\newblock {\em CoRR}, abs/1602.05629, 2016.

\bibitem{DBLP:journals/corr/abs-1710-06963}
H.~B. McMahan, D.~Ramage, K.~Talwar, and L.~Zhang.
\newblock Learning differentially private recurrent language models.
\newblock In {\em International Conference on Learning Representations (ICLR)},
  2018.

\bibitem{7958569}
P.~Mohassel and Y.~Zhang.
\newblock {SecureML: A System for Scalable Privacy-Preserving Machine
  Learning}.
\newblock In {\em IEEE Symposium on Security and Privacy (S\&P)}, pages 19--38,
  2017.

\bibitem{moosavi2016deepfool}
S.~Moosavi{-}Dezfooli, A.~Fawzi, and P.~Frossard.
\newblock Deepfool: {A} simple and accurate method to fool deep neural
  networks.
\newblock In {\em Conference on Computer Vision and Pattern Recognition
  (CVPR)}, pages 2574--2582, 2016.

\bibitem{nasr2018}
M.~Nasr, R.~Shokri, and A.~Houmansadr.
\newblock Machine learning with membership privacy using adversarial
  regularization.
\newblock In {\em ACM Conference on Computer and Communications Security
  (CCS)}, pages 634--646, 2018.

\bibitem{Nikolaenko:2013:PMF:2508859.2516751}
V.~Nikolaenko, S.~Ioannidis, U.~Weinsberg, M.~Joye, N.~Taft, and D.~Boneh.
\newblock Privacy-preserving matrix factorization.
\newblock In {\em ACM Conference on Computer and Communications Security
  (CCS)}, pages 801--812, 2013.

\bibitem{Ohrimenko2016}
O.~Ohrimenko, F.~Schuster, C.~Fournet, A.~Mehta, S.~Nowozin, K.~Vaswani, and
  M.~Costa.
\newblock Oblivious multi-party machine learning on trusted processors.
\newblock In {\em USENIX Security Symposium}, pages 619--636, 2016.

\bibitem{papernot2016limitations}
N.~Papernot, P.~McDaniel, S.~Jha, M.~Fredrikson, Z.~B. Celik, and A.~Swami.
\newblock The limitations of deep learning in adversarial settings.
\newblock In {\em IEEE European Symposium on Security and Privacy (EuroS\&P)},
  pages 372--387, 2016.

\bibitem{Pathak:2010:MDP:2997046.2997105}
M.~A. Pathak, S.~Rane, and B.~Raj.
\newblock Multiparty differential privacy via aggregation of locally trained
  classifiers.
\newblock In {\em Conference on Neural Information Processing Systems (NIPS)},
  pages 1876--1884, 2010.

\bibitem{Rajkumar2012}
A.~Rajkumar and S.~Agarwal.
\newblock A differentially private stochastic gradient descent algorithm for
  multiparty classification.
\newblock In {\em Conference on Artificial Intelligence and Statistics
  (AISTATS)}, pages 933--941, 2012.

\bibitem{vc3}
F.~Schuster, M.~Costa, C.~Fournet, C.~Gkantsidis, M.~Peinado, G.~Mainar-Ruiz,
  and M.~Russinovich.
\newblock {$VC3$}: Trustworthy data analytics in the cloud using {SGX}.
\newblock In {\em IEEE Symposium on Security and Privacy (S\&P)}, pages 38--54,
  2015.

\bibitem{Shen:2016:UDA:2991079.2991125}
S.~Shen, S.~Tople, and P.~Saxena.
\newblock Auror: Defending against poisoning attacks in collaborative deep
  learning systems.
\newblock In {\em Conference on Computer Security Applications (ACSAC)}, pages
  508--519, 2016.

\bibitem{ShokriS15}
R.~Shokri and V.~Shmatikov.
\newblock Privacy-preserving deep learning.
\newblock In {\em ACM Conference on Computer and Communications Security
  (CCS)}, pages 1310--1321, 2015.

\bibitem{shokri2017membership}
R.~Shokri, M.~Stronati, C.~Song, and V.~Shmatikov.
\newblock Membership inference attacks against machine learning models.
\newblock In {\em IEEE Symposium on Security and Privacy (S\&P)}, pages 3--18,
  2017.

\bibitem{song2017machine}
C.~Song, T.~Ristenpart, and V.~Shmatikov.
\newblock Machine learning models that remember too much.
\newblock In {\em ACM Conference on Computer and Communications Security
  (CCS)}, pages 587--601, 2017.

\bibitem{xiao2015feature}
H.~Xiao, B.~Biggio, G.~Brown, G.~Fumera, C.~Eckert, and F.~Roli.
\newblock Is feature selection secure against training data poisoning?
\newblock In {\em International Conference on Machine Learning (ICML)}, pages
  1689--1698, 2015.

\bibitem{xiao2015support}
H.~Xiao, B.~Biggio, B.~Nelson, H.~Xiao, C.~Eckert, and F.~Roli.
\newblock Support vector machines under adversarial label contamination.
\newblock {\em Neurocomputing}, 160:53--62, 2015.

\bibitem{zafar2017fairness}
M.~B. Zafar, I.~Valera, M.~Gomez{-}Rodriguez, and K.~P. Gummadi.
\newblock Fairness constraints: Mechanisms for fair classification.
\newblock In {\em Conference on Artificial Intelligence and Statistics
  (AISTATS)}, pages 962--970, 2017.

\bibitem{DBLP:conf/icml/ZemelWSPD13}
R.~S. Zemel, Y.~Wu, K.~Swersky, T.~Pitassi, and C.~Dwork.
\newblock Learning fair representations.
\newblock In {\em International Conference on Machine Learning (ICML)}, pages
  325--333, 2013.

\end{thebibliography}

}

\appendix

\section{Additional \textsc{Adult} dataset experiments}\label{sec:extra_adult_exp}

\begin{figure*}[h]
\centering
\includegraphics[width=0.75\textwidth]{./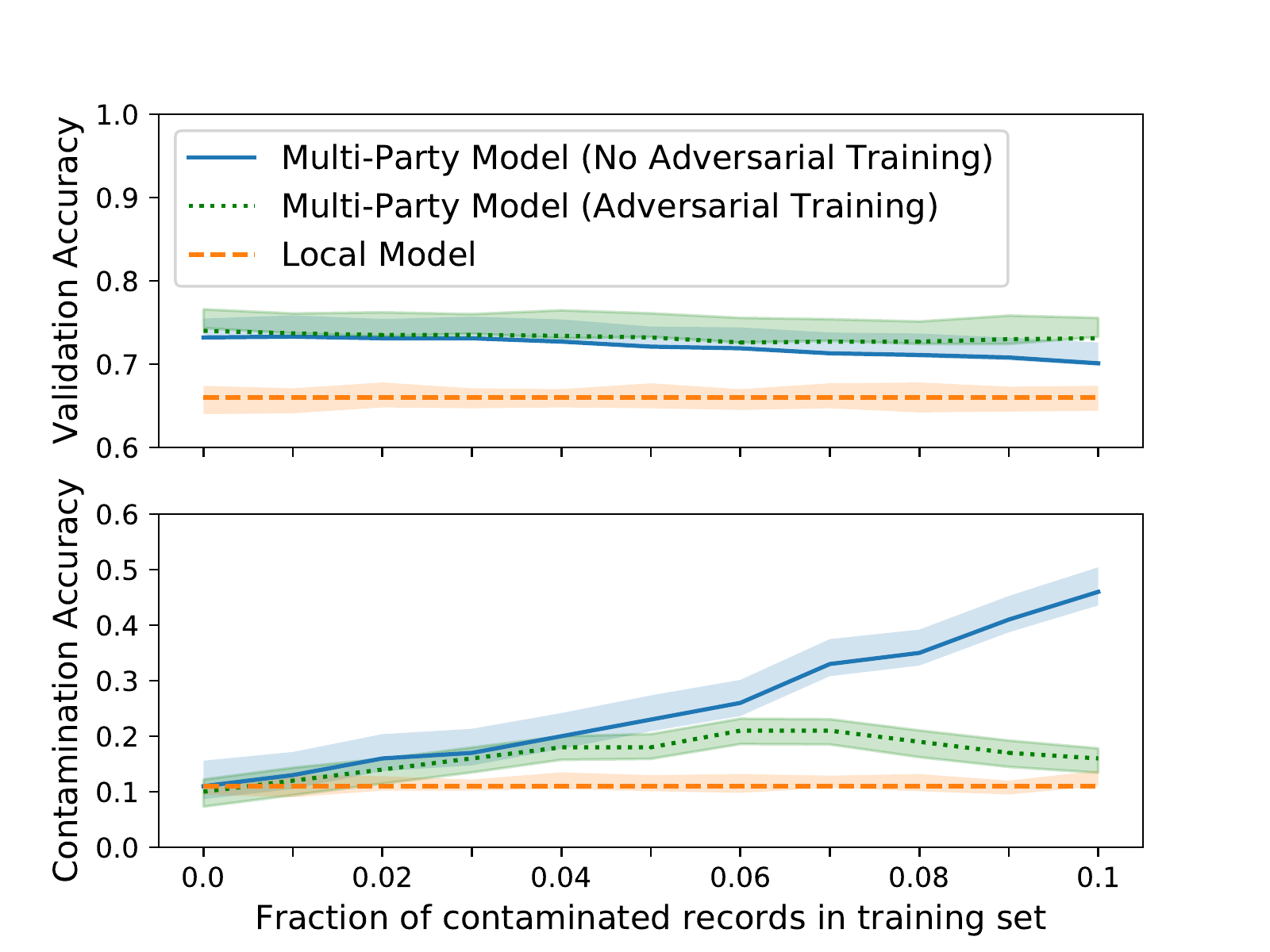}
\caption[]%
{{\small Contamination and validation accuracy for the \textsc{Adult} dataset as the number of contaminated records increases, for a contamination label of ``high education level''.}}    
\label{fig:adult_high_edu}
\end{figure*}

\Cref{fig:adult_high_edu} shows results for a contamination attack, and the corresponding adversarial training defense, when the contamination label is chosen to be 
``high education level'', and fixed the contamination attribute as described in ~\Cref{sec:datasets}. Similar to~\Cref{fig:adv_train_adult}, adversarial training mitigates the 
contamination attack, reducing contamination accuracy to a baseline local model level while retaining superior validation accuracy over both a local and contaminated multi-party
model. The adversarially trained multi-party
model learns the connection with similar levels of accuracy to experiments with the ``low education level'' label and so the contamination attack is not dependent on the choice of class label.

\section{Alternative mitigation strategies}\label{sec: alt_defenses}

Here, we outline several methods to defend against contamination attacks
and their drawbacks.

Depending on the number of contaminated records used in the contamination attack, 
detection may be relatively straightforward.
For example, if an attacker inserts a large number of contaminated records into the training set, 
the validation precision on the contaminated label may be significantly worse than on other labels.~\Cref{fig:per_label_val} shows this effect for the ~\textsc{Adult} dataset,  
with the number of contaminated records set to 10\% of the training set. However, we observed that for smaller numbers of contaminated records, the signal provided by the per label 
validation precision diminishes. Furthermore, this detection method does not provide information about the contaminated attribute, and we observed 
for prediction tasks with a larger number of classes, such as the \textsc{News20} dataset, the per label validation precision is not a reliable detection method, as there was a high variance of precision per label.

\begin{wrapfigure}{r}{0.5\textwidth}
\includegraphics[width=0.48\textwidth]{./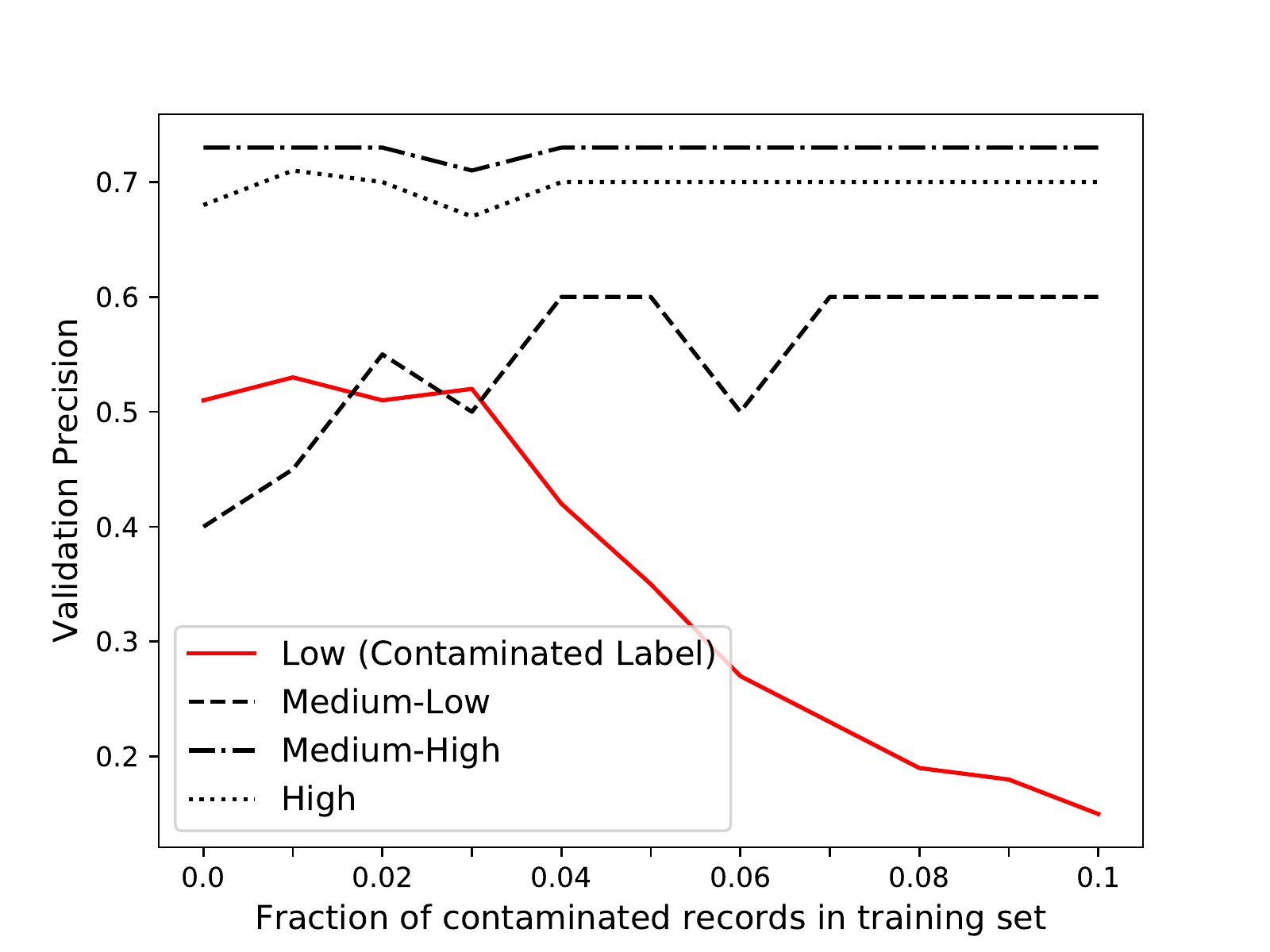}
\caption[]%
  {{\small Validation precision for each class label for the \textsc{Adult} dataset. }}    
\label{fig:per_label_val}
\end{wrapfigure}

If one knows the attribute likely to be contaminated, or if there are a small number of attributes in the dataset, independence tests could detect if a party's dataset contains
contaminated records. Given $n>1$ parties and an attribute, each possible pair of parties can test the hypothesis ``the distribution of an attribute between the
two parties is independent of one another''. This can be measured by a simple chi-square independence test. For the \textsc{Adult} dataset, 
we performed an independence test on all possible pairs for the two cases: (1) when one of the
parties was the attacker party and, (2) when both were victim parties. We found that when the attacker only modifies 1\% of their data,
the p-values are similar, both cases report p-values in 0.30-0.40 range and so reject the null hypothesis of independence. However, if the attacker training set contains more than 5\% contaminated records, 
the p-value in much lower than 0.05, and so the null hypothesis is accepted for case (1). 
If all data is expected to come from a similar distribution this could indicate the presence of contaminated records. However, the assumption of similar
training data is unlikely to hold for a large number of use-cases.
Moreover, this test is not applicable for text
classification due to the sparsity of the feature set.

Finally, one may consider leave-one-party-out cross validation techniques to measure the utility of including a party's training data.
Let there be $n>1$ parties with one attacker party.
To discover if a party is adversarial, a model is trained on $n-1$ parties
data, and evaluated on the training set of the left out party. If the left out party contains contaminated records, this should be discovered, since the model will report low accuracy on the left out party's data, 
having been trained
on only clean records. Experimentally we found that if the amount of contaminated records is small, in the order of 1-8\% of the size of the training set,
the difference between accuracy on an attacker and victim training set is negligible. This method also requires training a new model for each party, which may be prohibitively expensive
for a large dataset or larger numbers of parties.

\section{Lemma~\ref{lemma:subvar} proof}
\label{sec:subvar}

\subvarlemma*
\begin{proof}
Using the definition of conditional entropy:
\begin{eqnarray*}
H(W|V) = H(V|W) + H(W) - H(V)\\
H(W|U) = H(U|W) + H(W) - H(U).
\end{eqnarray*}
Combining the two we obtain
$H(W|V) - H(W|U) = H(V|W) - H(U|W) + H(U) - H(V)$.
Note that $H(V|U) = H(V) - H(U)$ since $H(U|V) = 0$ from the assumption of the lemma.
Hence, we need to show that $H(V|W) -  H(U|W) - H(V|U) \le 0$
or, equivalently, $H(V|W) \le  H(U|W) + H(V|U)$ to prove the statement.

Let $U'$ be a random variable that is independent of $U$, s.t., $H(V) = H(U) + H(U')$.
Hence, $H(V|U) = H(U')$ and $H(V|U') = H(U)$. Then
\begin{eqnarray*}
H(V|W) = H(U|W) + H(U'|W) \le H(U|W) + H(U') = H(U|W) + H(V|U). 
\end{eqnarray*}

\end{proof}

\section{Multi-party attacker experiments on \textsc{News20} dataset}\label{sec:multi_news20}

\begin{figure*}[h]
\centering
\includegraphics[width=0.75\textwidth]{./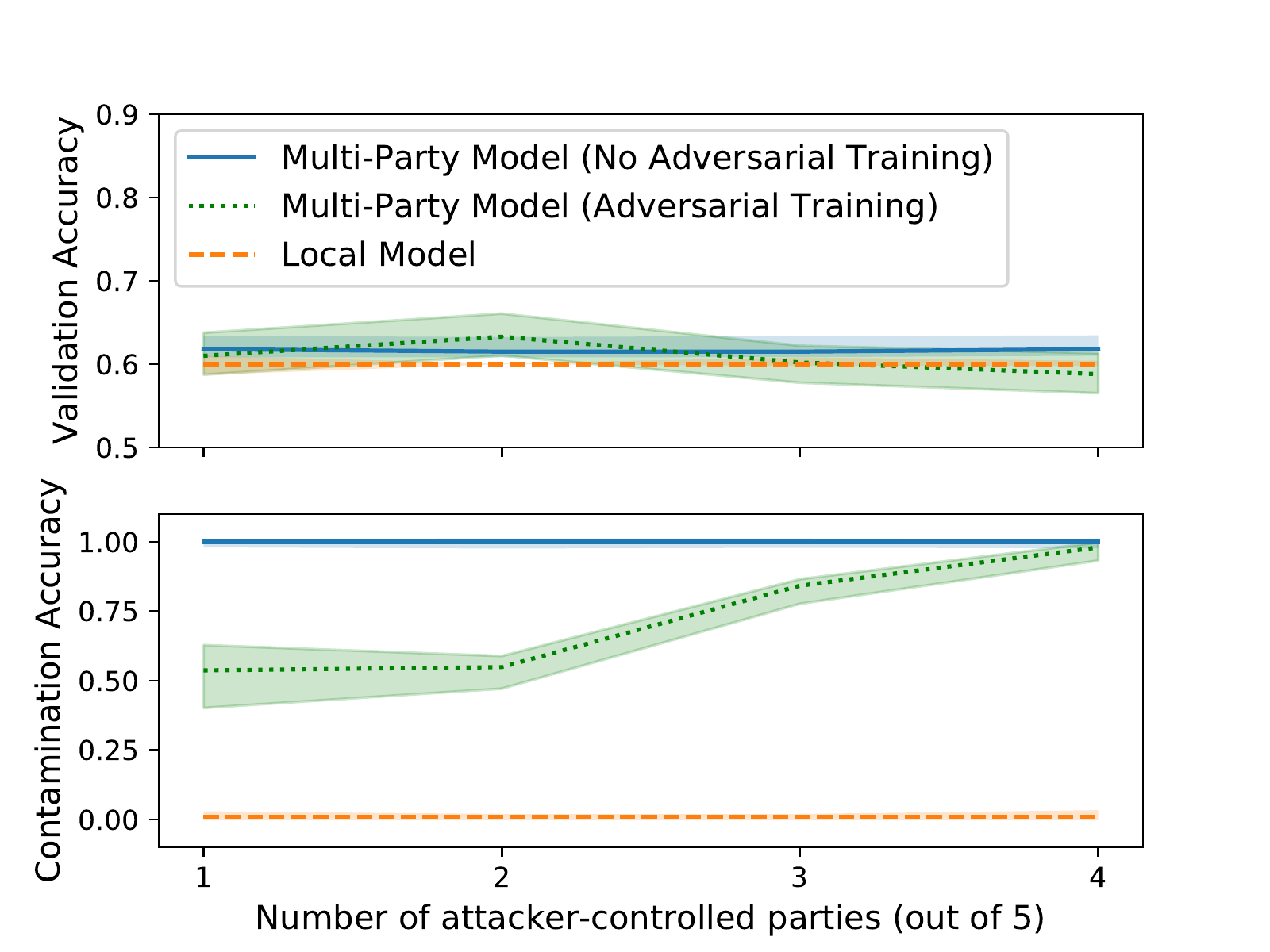}
\caption[]%
{{\small Contamination and validation accuracy for the \textsc{News20} dataset as the number of contaminated records increases. }}    
\label{fig:news20_multi}
\end{figure*}

Due to the small size of the \textsc{News20} dataset, we found there was high variance between successive experiments with random partitions of the dataset into distinct parties. 
To counter this high variance, we introduce a nuisance word into the dataset that does not have predictive links with any labels and was not already present in the dataset. We split the dataset into five parties, and one validation dataset, and introduce the new word such that it covers 5\% of each
parties data (and covers 10\% of the attacker controlled data). We introduce the word into 50\% of data points in the validation set so we can accurately capture the effect of the attack.
Figure~\ref{fig:news20_multi} models this attack as we increase the number of attacker controlled parties. Note we use the $\secondf$ method during training.

\end{document}